\newtheorem{proposition}{Proposition}
\newtheorem{theorem}{Theorem}
\newtheorem{definition}{Definition}
\newtheorem{lemma}{Lemma}
\newenvironment{proof}{\noindent \emph{Proof. }}{\hfill \hbox{\rlap{$\sqcap$}$\sqcup$}\\}
\title{Density of Binary Disc Packings:\\The $9$ Compact Packings}
\author{
Nicolas Bédaride
\footnote{
I2M Université Aix Marseille, centrale, CNRS UMR7373, 13453 Marseille, France.}
\and
Thomas Fernique
\footnote{LIPN, Université Paris Nord, CNRS UMR7030, 93430 Villetaneuse, France.}
}
\date{}
\begin{document}
\maketitle

\begin{abstract}
A disc packing in the plane is {\em compact} if its contact graph is a triangulation.
There are $9$ values of $r$ such that a compact packing by discs of radii $1$ and $r$ exists.
For each of these $9$ values, we prove that the maximal density over all the packings by discs of radii $1$ and $r$ is reached for a compact packing.
We describe such a packing and give its density.
\end{abstract}

\section{Introduction}

A {\em disc packing} (or {\em circle packing}) is a set of interior-disjoint discs in the Euclidean plane.
Its {\em density} $\delta$ is the proportion of the plane covered by the discs:
\begin{displaymath}
\delta:=\limsup_{k\to \infty}\frac{\textrm{area of the square $[-k,k]^2$ covered by the discs}}{\textrm{area of the square $[-k,k]^2$}}.
\end{displaymath}
A central issue in packing theory is to find the maximal density of disc packings.

If the discs have all the same radius, it was proven in \cite{FT43} that the density is maximal for the {\em hexagonal compact packings}, where discs are centered on a suitably scaled triangular grid (see also \cite{CW10} for a short proof).

For {\em binary} disc packings, {\em i.e.}, packings by discs of radii $1$ and $r\in(0,1)$ where both disc sizes appear, there are only seven values of $r$ for which the maximal density is known \cite{Hep00,Hep03,Ken04}.
These values are specific algebraic numbers which allow a {\em compact packing}, that is, a packing whose {\em contact graph} (the graph which connects the centers of any two tangent discs) is a triangulation.
In each of these seven cases, the maximal density turns out to be reached for a compact disc packing.
Compact packings thus seem to be good candidates to maximize the density.

It was proven in \cite{Ken06} that there are exactly $9$ values $r$ which allow a binary compact packing by discs of radii $1$ and $r$: the seven above mentioned, and two other ones ($r_5$ and $r_9$ in Fig.~\ref{fig:targets}).
We here prove that compact packings also maximize the density for these two remaining values.
We actually provide a new self-contained proof for all the $9$ values, denoted by $r_1,\ldots,r_9$:

\begin{theorem}
\label{th:main}
For each $r_i$ allowing a binary compact packing by discs of radius $1$ and $r_i$, the 
density of any packing by discs of radius $1$ and $r_i$ is less than or equal to the density $\delta_i$ of the periodic compact packing $\mathcal{P}_i$ depicted in Fig.~\ref{fig:targets}.
\end{theorem}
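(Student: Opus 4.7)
My plan is to follow the classical Fejes Tóth / Hales philosophy: reduce the global density bound to a finite family of local density bounds on the triangles of a canonical triangulation of the plane associated with the packing, and then verify those local bounds case by case for each of the nine radii $r_i$.

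The first step is to associate with any packing $\mathcal{P}$ by discs of radii $1$ and $r_i$ a triangulation of the convex hull of the disc centers whose triangles carry enough of the local geometry. The natural candidate is the \emph{FM-triangulation} (a refinement of the Delaunay triangulation adapted to discs of different radii), which has the key property that in a compact packing it coincides with the contact graph, and in a general packing each of its triangles has edges of length at least the sum of the radii of its endpoints. For each triangle $T$ with vertices of radii $s_1,s_2,s_3\in\{1,r_i\}$, I define the local density
\begin{displaymath}
d(T):=\frac{1}{\mathrm{area}(T)}\sum_{j=1}^{3}\tfrac{1}{2}\alpha_j(T)\,s_j^2,
\end{displaymath}
where $\alpha_j(T)$ is the angle at the $j$-th vertex. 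A standard limsup argument, together with boundary corrections that are negligible as $k\to\infty$, then shows that the global density of $\mathcal{P}$ is bounded by $\sup_T d(T)$, where the supremum is over all admissible (i.e., side lengths $\ge s_j+s_k$) triangles.

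The second step is to prove, for each $r_i$, that $d(T)\le\delta_i$ for every admissible triangle $T$. I split triangles according to their unordered triple of vertex radii (four types: $(1,1,1)$, $(1,1,r_i)$, $(1,r_i,r_i)$, $(r_i,r_i,r_i)$). On each type the function $d(T)$ is a two-parameter smooth function (in, say, two of the side lengths) on a compact domain, and one shows by calculus that the maximum is attained at the \emph{tight} triangle where all three sides equal $s_j+s_k$; the inequality $d(T_{\mathrm{tight}})\le\delta_i$ can then be checked as an explicit algebraic inequality in $r_i$. For the seven ``easy'' radii this essentially recovers the bounds already obtained in \cite{Hep00,Hep03,Ken04}.

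The main obstacle is that this naive ``tight triangles are worst'' claim fails: for the two new radii $r_5$ and $r_9$ (and in fact for some of the earlier ones) there exist admissible triangles whose local density strictly exceeds $\delta_i$, namely certain $(1,1,1)$ or $(1,r_i,r_i)$ triangles that cannot appear in a compact packing. The remedy is a \emph{redistribution} scheme: I introduce a correction function $\phi$ on triangles, additive over the triangulation and summing to zero (so that it does not affect the global density), chosen so that $d(T)+\phi(T)\le\delta_i$ holds uniformly. Concretely, one transfers density from too-dense triangles to their neighbours along specified edges, using the fact that the too-dense triangles force neighbouring triangles to be geometrically deficient. Designing such a $\phi$ and verifying the corrected local inequalities for each of the nine radii is the technical heart of the argument; the case analysis is finite but intricate, and the delicate points are (i) identifying which edges carry non-zero transfer, (ii) ensuring that the transfers along an edge shared by two triangles are compatible, and (iii) discharging the residual optimization, which for some types reduces to a one-parameter problem amenable to explicit analysis and for others requires interval-arithmetic or resultant computations on low-degree polynomials in $r_i$. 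Assembling these local inequalities yields Theorem~\ref{th:main}.
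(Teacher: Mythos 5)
Your proposal follows essentially the same strategy as the paper: an FM-triangulation of the disc centers, a per-triangle local density bound that fails for certain non-compact (stretched/deformed) triangles, and a redistribution scheme --- the paper's ``potential'' $U(T)$, split into vertex and edge contributions with nonnegative local sums --- whose corrected local inequalities are verified by a finite case analysis with interval arithmetic. The paper realizes your correction function $\phi$ concretely as vertex potentials (redistributing excess among tight triangles around a vertex) plus edge potentials (transferring deficit across the shared edge of quasi-stretched triangles), which is exactly the mechanism you sketch.
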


\begin{figure}[hbt]
\centering
\begin{tabular}{lll}
  $r_1\approx 0.63$\hfill $\delta_1\approx 0.9106$ & $r_2\approx 0.54$\hfill $\delta_2\approx 0.9116$& $r_3\approx 0.53$\hfill $\delta_3\approx 0.9141$\\
  \includegraphics[width=0.3\textwidth]{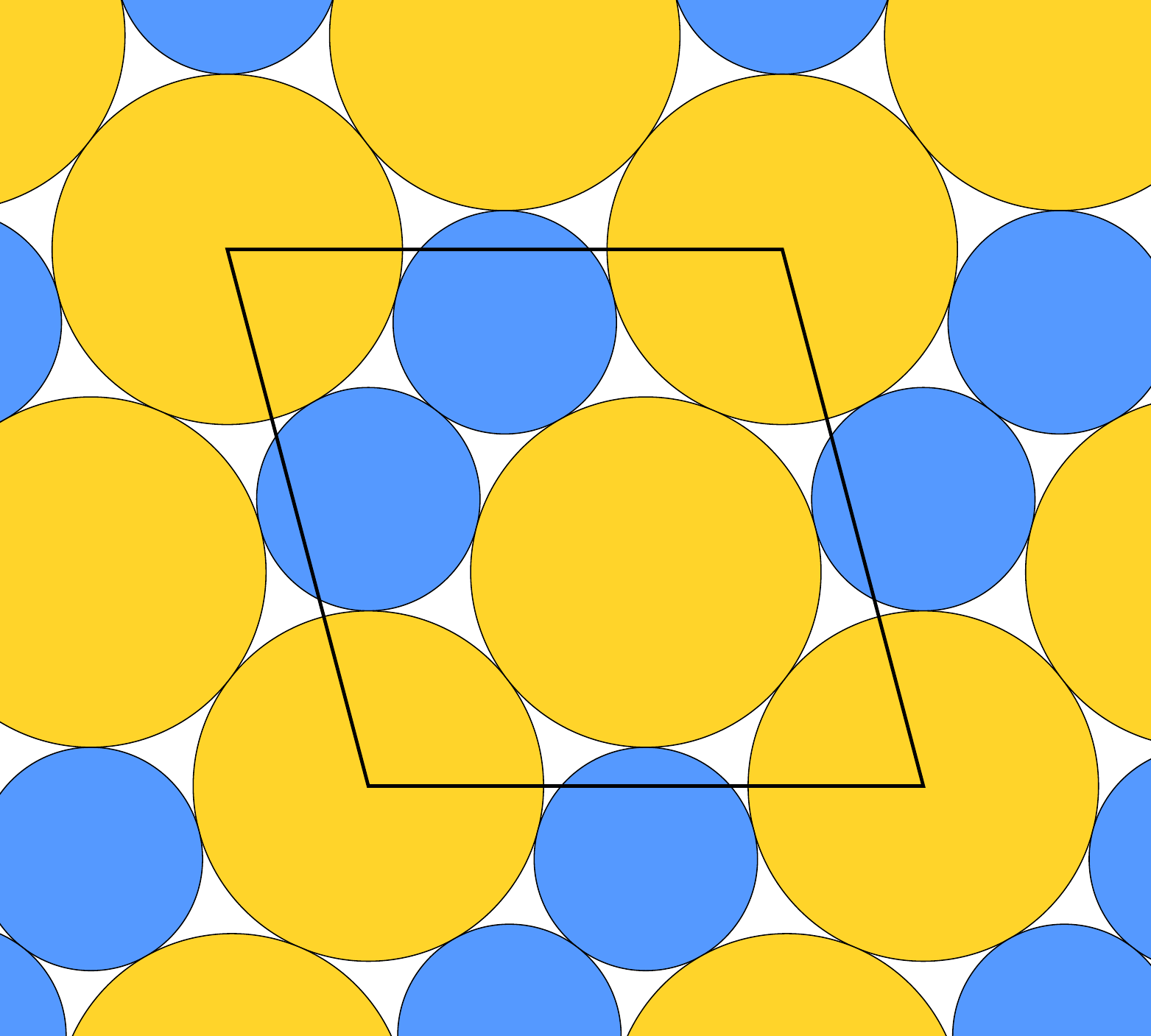} &
  \includegraphics[width=0.3\textwidth]{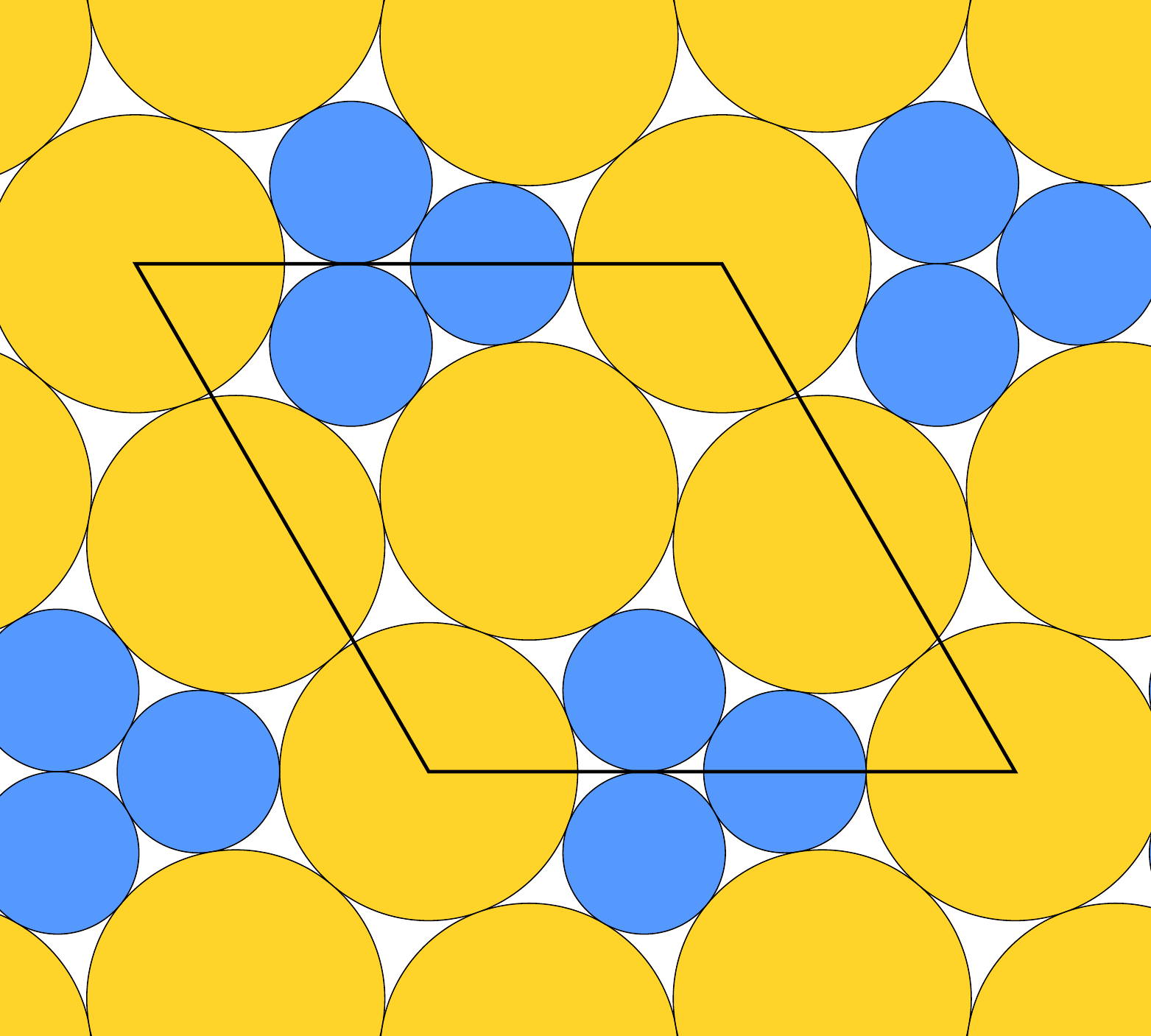} &
  \includegraphics[width=0.3\textwidth]{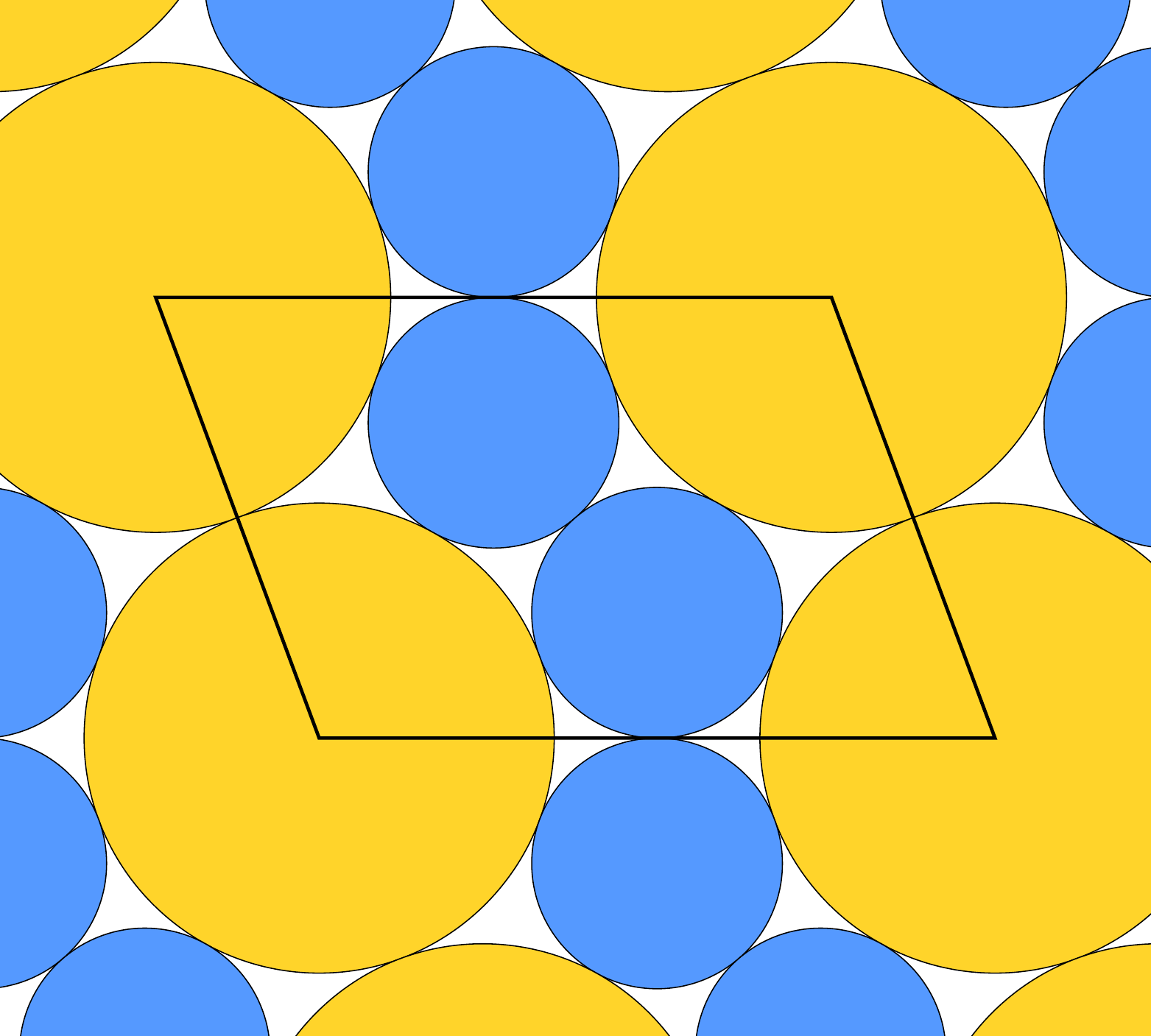}\\
  $r_4\approx 0.41$\hfill $\delta_4\approx 0.9201$ & $r_5\approx 0.38$\hfill $\delta_5\approx 0.9200$ & $r_6\approx 0.34$\hfill $\delta_6\approx 0.9246$\\
  \includegraphics[width=0.3\textwidth]{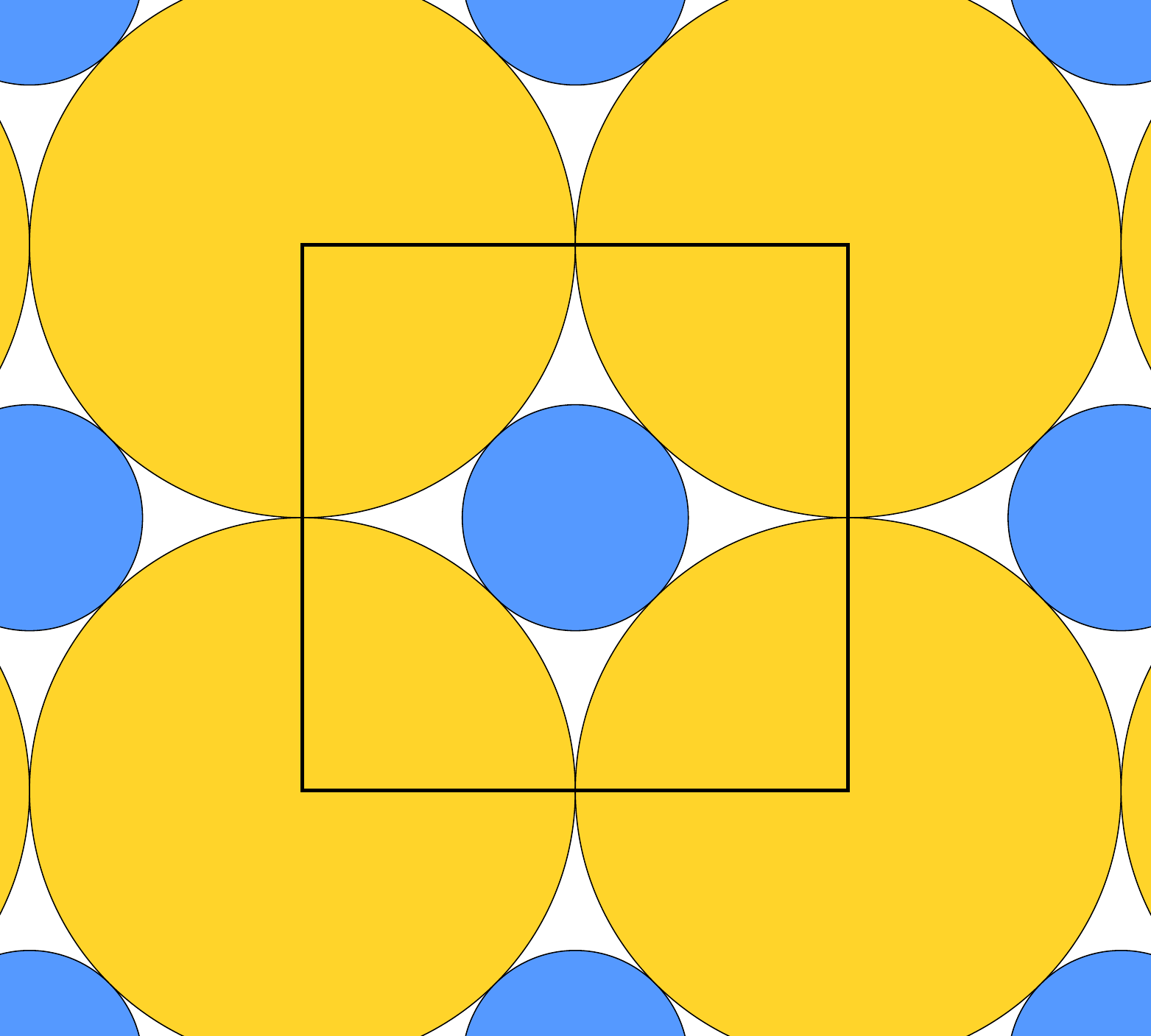} &
  \includegraphics[width=0.3\textwidth]{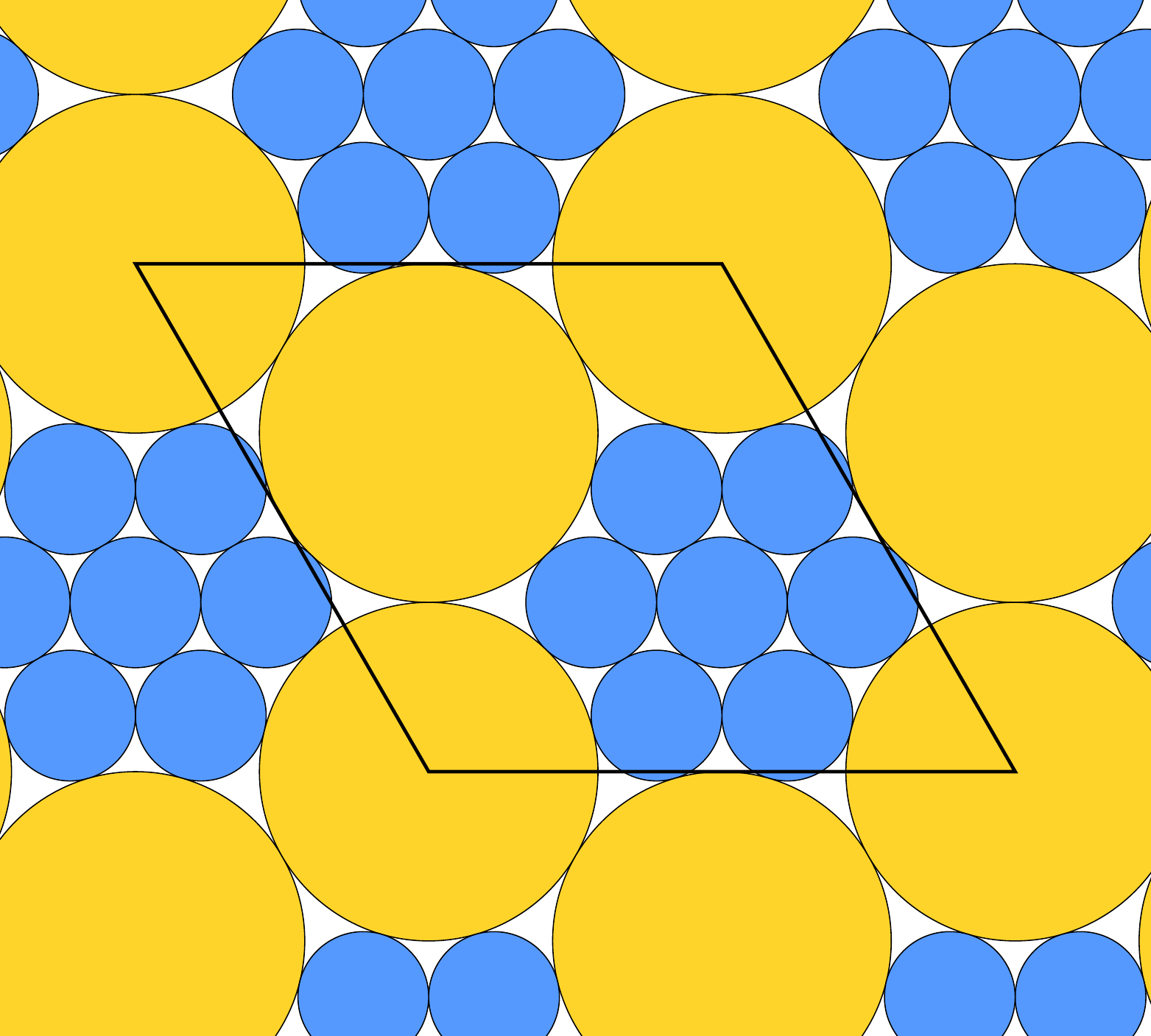} &
  \includegraphics[width=0.3\textwidth]{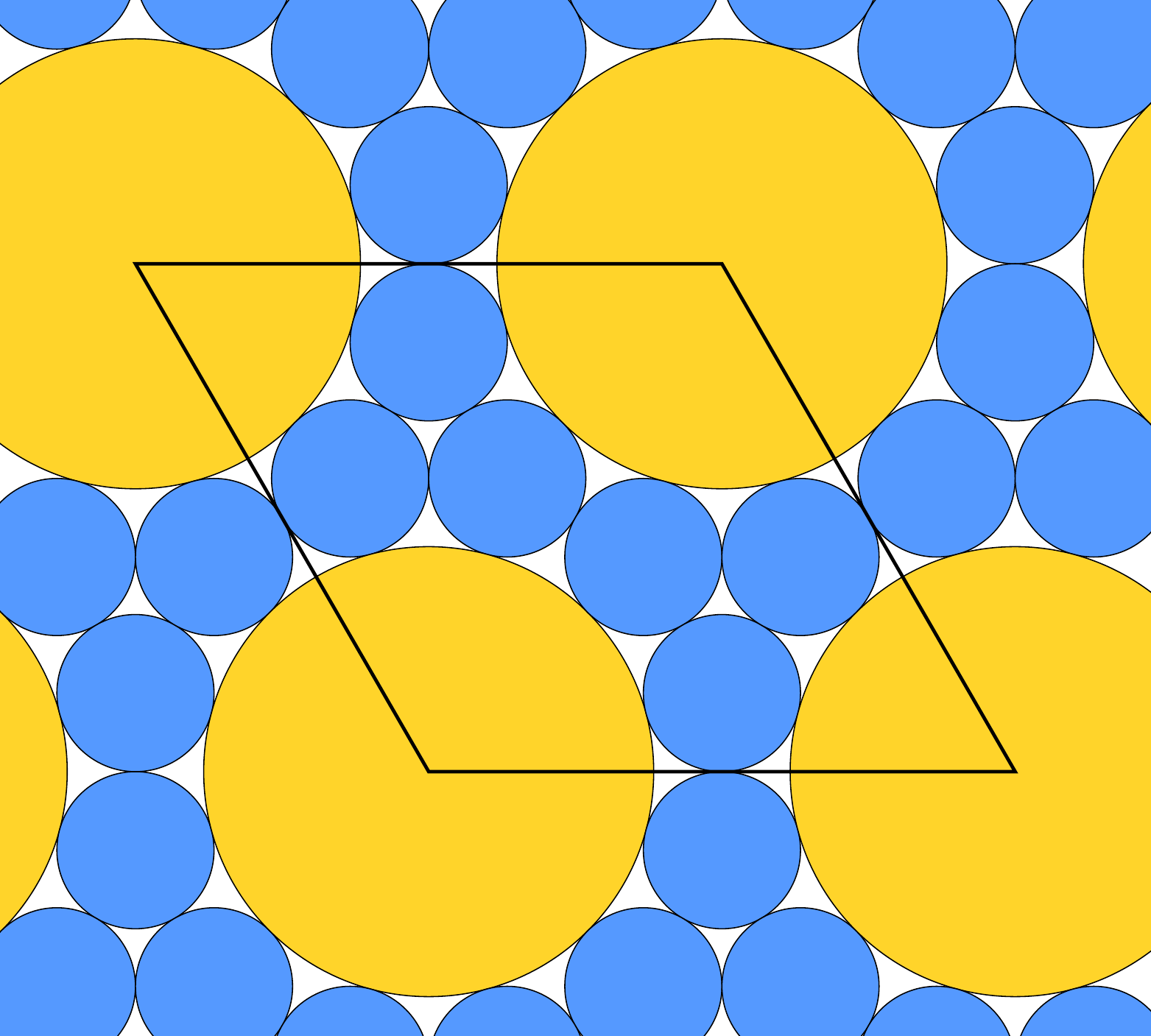}\\
  $r_7\approx 0.28$\hfill $\delta_7\approx 0.9319$ & $r_8\approx 0.15$\hfill $\delta_8\approx 0.9503$ & $r_9\approx 0.10$\hfill $\delta_9\approx 0.9624$\\
  \includegraphics[width=0.3\textwidth]{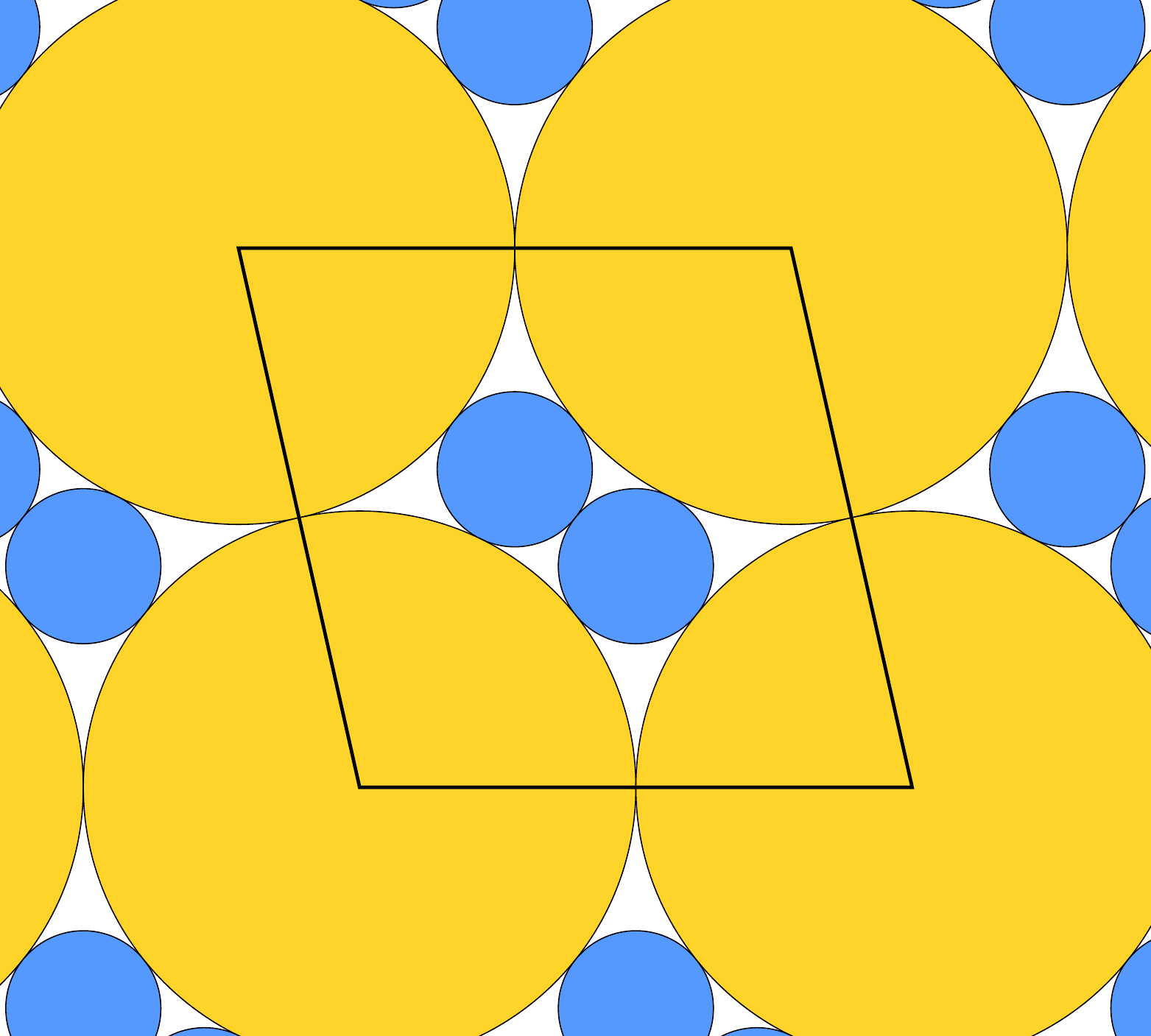} &
  \includegraphics[width=0.3\textwidth]{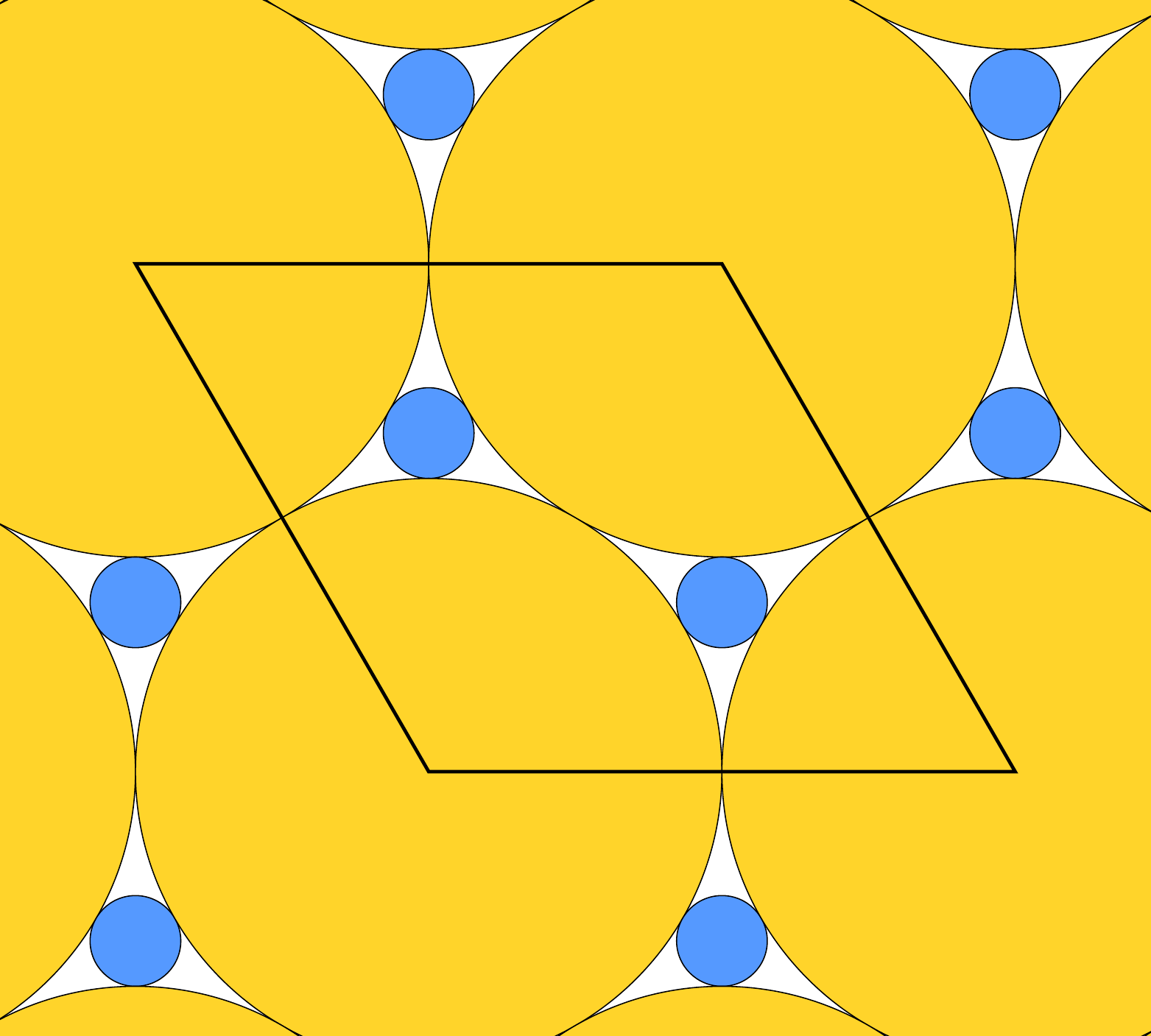} &
  \includegraphics[width=0.3\textwidth]{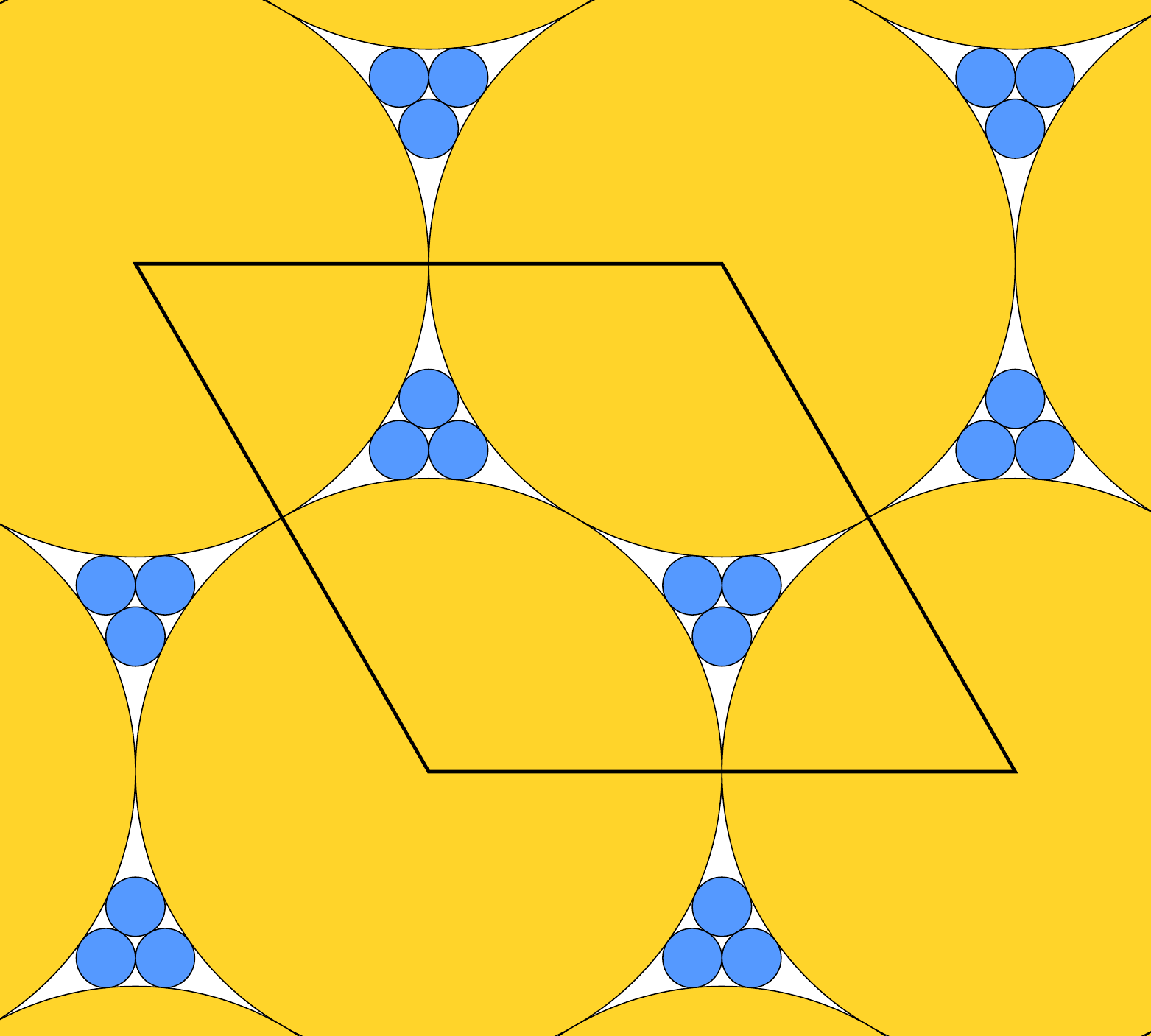}
\end{tabular}
\caption{
The packings $\mathcal{P}_1,\ldots,\mathcal{P}_9$ which maximize the density.
They are periodic, with the black parallelogram showing a fundamental domain (each picture is scaled so that its area is one quarter of the picture area).
Numerical approximations by truncation of the ratio $r_i$ of the disc radii and of the density $\delta_i$ of the packing are given (both $r_i$ and $\delta_i/\pi$ are algebraic: their minimal polynomials are given in Appendix~\ref{sec:exact}).
}
\label{fig:targets}
\end{figure}

Proving Theorem~\ref{th:main} amounts to solve an optimization problem.
Since discs have real coordinates, there is a continuum of cases to consider.
Such an issue already arises for packings restricted to bounded regions of the plane.
In this context, {\em interval arithmetic} is a powerful tool to optimize a function over a compact set with the help of a computer.
This is used, for example, in \cite{FKS19} to find the smallest disc that can contain a packing of any set of discs whose total area is fixed.
We shall also use it here, as detailed in Section~\ref{sec:computer}.

An even more serious problem is that the coordinates of the circles are not bounded since we consider packings of the whole Euclidean plane.
Can we get back to an optimization problem over a compact set?
Or, with the formalism introduced in \cite{Lag02}, can we define suitable ``weighting rules'' which brings the problem back to solving ``local density inequalities''?
Although there is no general guarantee that this is possible, this is the strategy successfully followed in \cite{Hal05} to prove the Kepler conjecture or in \cite{Hep00,Hep03,Ken04} to prove the maximal density of $7$ of the $9$ packings in Fig.~\ref{fig:targets}.
This is also the strategy that we here follows, as outlined in Section~\ref{sec:localization}.

In the light of Theorem~\ref{th:main} (and the fact that with a single disc size, the density is maximized by the hexagonal compact packing), we can legitimately wonder if compact packings always maximize density among packings with the same disc sizes.
This was conjectured in \cite{CGSY18} to holds under an additional {\em saturation hypothesis} (a packing is saturated if no more discs can be added).
In particular, what about the $164$ compact packings with three sizes of discs discovered in \cite{FHS20}?
The conjecture was proven to hold for one case in \cite{Fer19}.
However, it was later disproven on another case in \cite{FP21} (where a case which illustrates the necessity of the saturation hypothesis is also provided).
Most of the other cases remain open and maybe an additional hypothesis could correct the conjecture.

\section{Strategy}
\label{sec:localization}

Given a disc packing, we shall decompose the plane by a specific triangulation $\mathcal{T}$ of the disc centers, with triangles of uniformly bounded diameter (Section~\ref{sec:triangulation}).
We define the {\em emptiness} $E(T)$ of a triangle $T\in\mathcal{T}$ by
\begin{displaymath}
E(T):=\delta_i\cdot\textrm{area}(T)-\textrm{cov}(T),
\end{displaymath}
where $\textrm{area}(T)$ is the area of $T$, $\textrm{cov}(T)$ is the area of $T$ inside the discs centered on the vertices of $T$ and $\delta_i$ is the density of the $i$-th {\em target packing} $\mathcal{P}_i$ (Fig.~\ref{fig:targets}).
A triangle has thus positive emptiness if it has density less than $\delta_i$, and proving than the packing has overall density $\delta\leq\delta_i$ amounts to prove that the average emptiness is nonnegative, since
\begin{displaymath}
\limsup_{k\infty}\frac{1}{4k^2}\sum_{T\in \mathcal{T}_k}E(T)=\delta_i-\delta,
\end{displaymath}
where $\mathcal{T}_k$ denotes the triangles of $\mathcal{T}$ which intersect the square $[-k,k]^2$.
Indeed, since the triangles are uniformly bounded, the total area (hence the emptiness) of the triangles which cross the boundary of $[-k,k]^2$ is $O(k)$ and we have:
$$
\sum_{T\in \mathcal{T}_k}E(T)=\delta_i\cdot\left(\textrm{area of $[-k,k]^2$}\right)-\left(\textrm{area of $[-k,k]^2$ covered by discs}\right)+O(k).
$$
Dividing by the area of $[-k,k]^2$ and taking the limit yields the claimed equality.

In order to prove that the average emptiness is nonnegative, we will define over triangles a {\em potential} $U$ which satisfies two inequalities.
The first one, further referred to as {\em the global inequality}, involves all the triangles of $\mathcal{T}$:
\begin{equation}
\label{eq:global}
\limsup_{k\infty}\frac{1}{4k^2}\sum_{T\in \mathcal{T}_k}U(T)\geq 0.
\end{equation}
The second one, further referred to as {\em the local inequality}, involves any triangle $T$ which can appear in $\mathcal{T}$:
\begin{equation}
\label{eq:local}
E(T)\geq U(T).
\end{equation}
The result then trivially follows:
\begin{displaymath}
\limsup_{k\infty}\frac{1}{4k^2}\sum_{T\in \mathcal{T}_k}E(T)\geq\limsup_{k\infty}\frac{1}{4k^2}\sum_{T\in \mathcal{T}_k}U(T)\geq 0.
\end{displaymath}
Since the global inequality for $U$ is the same as for $E$, it seems we just made things worse by adding a second inequality.
However, we shall choose $U$ so that we can prove that the global inequality follows from an inequality on a finite set of finite configurations.
Namely, the potential of a triangle $T$ will be the sum of a {\em vertex potential} $U_v(T)$ defined on each vertex $v\in T$ and an {\em edge potential} $U_e(T)$ defined on each edge $e\in T$
$$
U(T):=\sum_{v\in T}U_v(T)+\sum_{e\in T}U_e(T),
$$
such that, for any vertex $v$ and edge $e$ of the triangulation $\mathcal{T}$:
\begin{equation}
\label{eq:vertex}
\sum_{T\in\mathcal{T}|v\in T} U_v(T)\geq 0
\end{equation}
\begin{equation}
\label{eq:edge}
\sum_{T\in\mathcal{T}|e\in T} U_e(T)\geq 0.
\end{equation}
Inequality~\eqref{eq:vertex}, which involves the triangles sharing a vertex, is proven in Section~\ref{sec:global_vertex} (Prop.~\ref{prop:vertex_pot}).
Inequality~\eqref{eq:edge}, which involves pairs of triangles sharing an edge, is proven in Section~\ref{sec:global_edge} (Prop.~\ref{prop:edge_pot}).

The local inequality~\eqref{eq:local} has then to be proven for each triangle of the decomposition.
We make two cases, depending whether the triangle is a so-called {\em $\varepsilon$-tight} triangle or not.
The former case, considered in Section~\ref{sec:local_tight}, is proven with elementary differential calculus (Prop.~\ref{prop:local_tight}).
The latter case, considered in Section~\ref{sec:local}, is proven with a computer by dichotomy (Prop.~\ref{prop:local}).
Theorem~\ref{th:main} follows.

\section{Computer use}
\label{sec:computer}

The first use we will make of the computer is to check inequalities on real numbers.
Since real numbers cannot be all exactly represented on a computer, we shall systemically use {\em interval arithmetic}.
The principle is simple:
\begin{itemize}
\item any real number is represented on the computer by an interval whose endpoints are exactly representable floating-point numbers.
\item computations are performed in a conservative way, that is, the image by an $n$-ary real function $f$ of intervals $x_1,\ldots,x_n$ must be an interval which contains at least all the real numbers $f(y_1,\ldots,y_n)$ for $y_i\in x_i$.
\end{itemize}
The interval given by a sequence of computation will thus always contain at least the exact result.
Exactly what this interval is depends on the actual software implementation.
In this paper, we use SageMath \cite{sage} with its $53$ bits of precision interval arithmetic.
Then, to prove an inequality $A\leq B$ between real number $A$ and $B$, we will check that the right endpoint of the interval which represents $A$ is less than or equal to the left endpoint of the interval which represents $B$.
If this checking fails, it does not means that the inequality is false but that we cannot prove it with interval arithmetic.
This is how inequalities~\eqref{eq:local}, \eqref{eq:vertex} and \eqref{eq:edge} are proven.

In this article, numerical values are given in Tables~\ref{tab:m}, \ref{tab:m_c5}, \ref{tab:Z}, \ref{tab:eps} and \ref{tab:LQ}.
Despite their appearance as approximate values, these values are exact in the mathematical sense and can be used by anyone to get the claimed results.
However, someone who is interested in computer arithmetic will be surprised that these values are not exactly representable in floating-point arithmetic, even though they have been found by a computer program.
How can they be used on the computer to obtain the claimed results?
The point is that giving exactly representable values in an article would have no real practical interest: who is going to copy these long sequences of number to check with his own program the validity of our results?
Instead, we decided to give simpler values, usually not exactly representable, but with the guarantee that any value whose first digits coincide with all the given ones can be used in a computer check.
For example, in the first line and second column of Table~\ref{tab:m}, instead of the lower bound
$$
0.0004908434532219589619371491462374024195014499127864837646484375,
$$
which is an exact representable value, we give the value $0.0005$.
Since it is larger, it is still a mathematically valid lower bound.
It is not exactly representable, but the above-mentioned guarantee ensures that any value between $0.0005$ and $0.0006$ is valid and can be used in a computer check (this have been ensured by computing with an interval containing $[0.0005, 0.0006]$).
In conclusion, the given values have the advantage of being ``humanly readable'', mathematically valid, and usable by a reader who would like to check the calculations with his own program.
This also shows that there is in fact some margin on the exact value of the constants used in the code.

The second use we will make of the computer is to check inequalities over intervals (namely products of intervals, in the proof of Prop.~\ref{prop:local}).
We again use interval arithmetic for this, but instead of using an interval to represent an exact real number, we simply use the interval itself!
In other words, to prove an inequality $f(x)\leq g(x)$ over an interval $[a,b]$, we compute $f(x)$ and $g(x)$ in interval arithmetic with $x=[a,b]$ and we check whether the right endpoints of the interval $f(x)$ is less than or equal to the left endpoints of the interval $g(x)$.

The last use we will make of the computer is to perform exhaustive search.
This is done in a classic way in the proof of Prop.~\ref{prop:vertex_pot}, where there is a finite number of cases to check, as well as in a less conventional way in the proof of Prop.~\ref{prop:local}, where we check the local inequality~\eqref{eq:local} on an infinite set of triangles.
For that, we shall prove that the set of triangles is compact and break it down in finitely many sets which are sufficiently small so that interval arithmetic can be used (as explained above) to prove that {\em any} triangle in such a set satisfies the wanted inequality.

The complete code needed to verify all the results of this paper, around $400$ lines long, can be found in the supplementary materials (\verb+binary.sage+).
The language is python, with some use of the open-source software SageMath \cite{sage} ({\em e.g.} for interval arithmetic).
Although quite slow, Python has the advantage of being easily readable and (currently) quite widespread.
All the computations have been performed on our modest laptop, an Intel Core i5-7300U with $4$ cores at $2.60$GHz and $16$ Go RAM.

\section{FM-triangulation}
\label{sec:triangulation}

Given a disc packing, define the {\em cell} of a disc as the set points of the plane which are closer to this disc than to any other (Fig.~\ref{fig:FM_triangulation}, left).
These cells form a partition of the plane whose dual is a triangulation, referred to as the {\em FM-triangulation} of the packing (Fig.~\ref{fig:FM_triangulation}, right).
Actually, the dual may be not a triangulation if more than three cells meet in a point (this happens if identical circles are centered on the vertices of a regular $n$-gon, $n>3$).
In this case, we arbitrarily add any maximal choice of non-crossing chords.

\begin{figure}[hbt]
\centering
\includegraphics[width=0.49\textwidth]{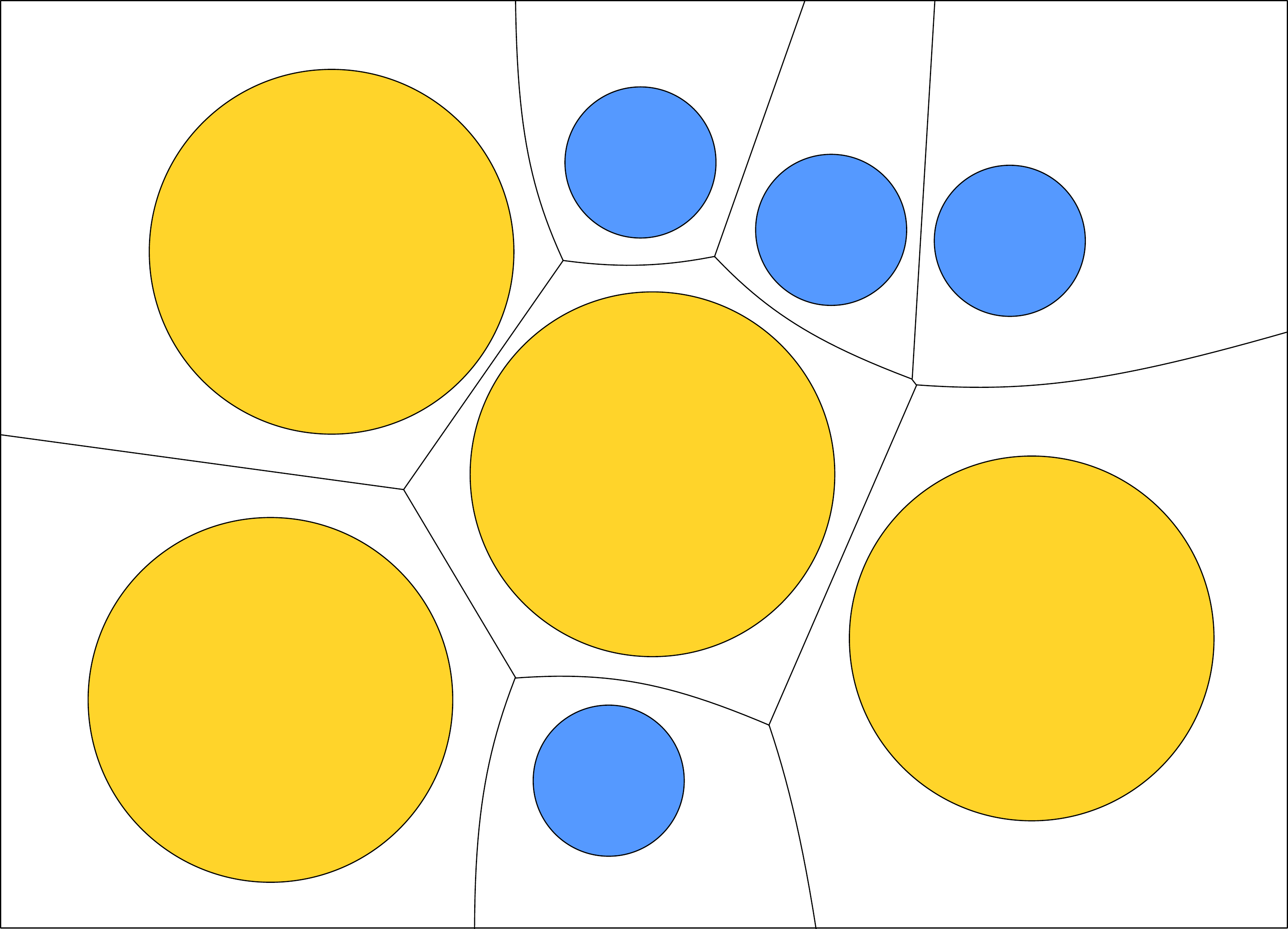}
\hfill
\includegraphics[width=0.49\textwidth]{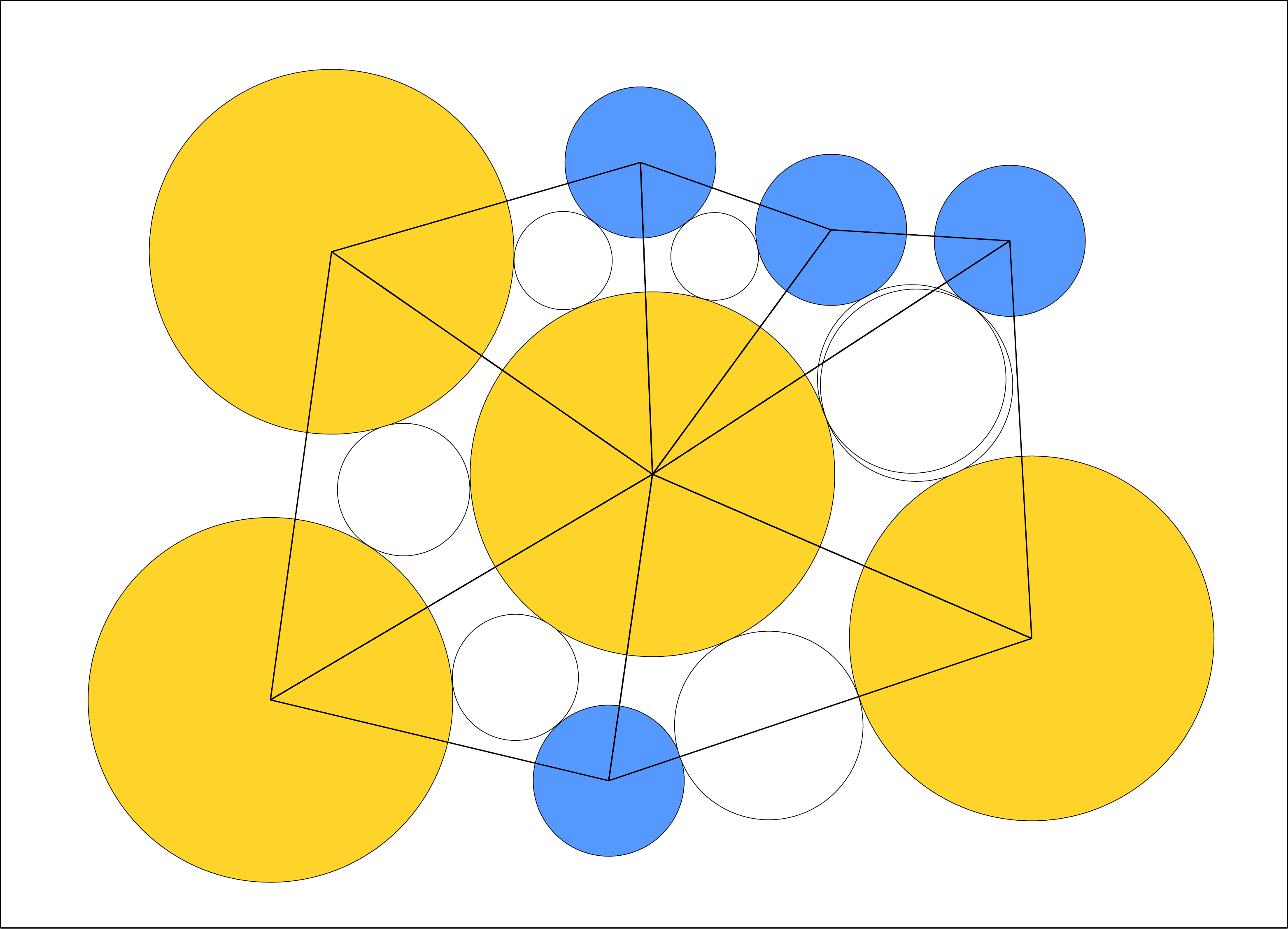}
\caption{
Some discs and their cells (left).
The corresponding FM-triangulation with the support discs in white (right).
Each support disc is centered on a vertex of the cell partition.
It is tangent to the discs centered on the vertices of the triangle which corresponds in the dual to its center.
Support discs may overlap each other but they do not overlap any disc of the packing.
}
\label{fig:FM_triangulation}
\end{figure}

A disc is said to be {\em surrounded} by discs of radii $r_1,\ldots,r_k$ if the sequence of its neighbors in the triangulation, ordered by angles and up to a cyclic permutation or a reversal, are discs of radii $r_1,\ldots, r_k$.
For example, in Fig.~\ref{fig:FM_triangulation}, the large central disc is surrounded by discs of radii $1,1,r,1,r,r,r$.

Introduced in \cite{FM58} (see also \cite{FT64}), FM-triangulations are also known as {\em additively weighted Delaunay triangulations}.
A triangle $T$ which appears in the FM-triangulation of some disc packing is said to be {\em feasible}.
The following property somehow extends the ``empty disc property'' of classic Delaunay triangulations:

\begin{proposition}[\cite{FM58}]
\label{prop:support_disc}
A triangle $T$ which connects the centers of three discs of a disc packing is feasible if and only if there exists a disc, called the {\em support disc} of $T$, which is interior disjoint from the discs of the packing and tangent to each of the three discs of $T$.
\end{proposition}

We shall also rely on the following property, which may be false in classic Delaunay triangulations when the ratio of disc radius is greater than $\sqrt{2}-1$:

\begin{proposition}[\cite{FM58}]
\label{prop:sector}
The disc sector delimited by any two edges of a feasible triangle $T$ never crosses the third edge of $T$.
\end{proposition}

A simple consequence we shall rely on (in particular in the computer program) is that the angles of feasible triangles cannot be too small:

\begin{lemma}
\label{lem:min_angles}
Let $T$ be a triangle in a FM-triangulation of a saturated packing by discs of radius $1$ and $r$.
Denote by  $A$, $B$ and $C$ the vertices of $T$ and by $x$, $y$ and $z$ the radii of the discs centered on these vertices.
Then, the angle $\widehat{A}$ in the vertex $A$ satisfies
$$
\sin\widehat{A}\geq\min\left(\frac{y}{x+2r+y},\frac{z}{x+2r+z}\right).
$$
\end{lemma}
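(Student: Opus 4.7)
Plan. Without loss of generality assume $y\leq z$, so that the minimum on the right-hand side equals $y/(x+2r+y)$; it then suffices to prove $\sin\hat{A}\geq y/(x+2r+y)$. The strategy is to write $\sin\hat{A}$ as the ratio of an altitude to a side length and to bound each piece separately using, respectively, the support-disc property and the sector property of the FM-triangulation.

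For the side $AB$, let $O$ be the center and $s$ the radius of the support disc of $T$, which exists by property~(1) above. The saturation hypothesis forces $s\leq r$: the support disc is interior disjoint from the packing, so otherwise a disc of radius $r$ would fit at its location. The triangle inequality in $\triangle OAB$ then yields
$$
|AB|\leq |OA|+|OB|=(x+s)+(y+s)\leq x+y+2r.
$$

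For the altitude, let $F$ denote the foot of the perpendicular from $B$ to the line $AC$, so that $\sin\hat{A}=|BF|/|AB|$. I claim that $|BF|\geq y$. When the foot $F$ lies on segment $AC$ --- in particular whenever both $\hat A$ and $\hat C$ are at most $\pi/2$ --- the point $F$ belongs to the angular cone at $B$ spanned by the rays $BA$ and $BC$, since the whole segment $AC$ lies in this cone. Then $|BF|<y$ would place $F$ inside the disc sector at $B$, contradicting property~(2) above. If instead $\hat C>\pi/2$, then $\hat A<\pi/2$, and applying the symmetric argument to the perpendicular from $C$ to line $AB$ yields $|CF'|\geq z$, whence $\sin\hat A=|CF'|/|AC|\geq z/(x+z+2r)\geq y/(x+y+2r)$, using the monotonicity of $t\mapsto t/(x+2r+t)$ and the assumption $y\leq z$. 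Combining with the bound on $|AB|$ gives the desired inequality.

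The main obstacle is the remaining case, where $\hat A$ itself is obtuse: then both perpendicular feet fall outside their respective segments, so the FM-sector property is silent. I expect this case to be the delicate one, and to require a finer use of the saturation hypothesis --- for instance combining the bound $|BC|\leq y+z+2r$ (from the support disc applied to $\triangle OBC$) with the law-of-cosines inequality $|BC|^2>|AB|^2+|AC|^2$ forced by $\hat A>\pi/2$, together with the bounds $|AB|,|AC|\geq x+y,x+z$, to estimate $\sin\hat A$ directly from the resulting constraints on the three sides.
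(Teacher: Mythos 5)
Your overall approach is the same as the paper's: write $\sin\hat{A}$ as the ratio of the altitude from a neighbouring vertex to the length of the corresponding side, bound the side above by $x+2r+y$ via the support disc and the saturation hypothesis, and bound the altitude below by the disc radius via the sector property. The parts you complete are correct, and you are in fact more careful than the paper, whose proof simply asserts that ``the altitude of $B$ is at least $y$'' without checking that the foot of the perpendicular lies where the sector property applies. Your treatment of the case $\hat{C}>\pi/2$ by switching to the perpendicular from $C$ is also fine (the paper avoids this subcase implicitly: its hypothesis that $AB$ is the shorter of the two sides at $A$ forces $\hat{C}\le\pi/2$, since the side opposite an obtuse angle is the longest).

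The case you flag as the obstacle --- $\hat{A}$ obtuse, so that both feet fall outside their segments --- is a genuine gap, and it is present in the paper's own proof as well. Worse, the stated inequality actually fails there, so the law-of-cosines patch you sketch cannot close it. Concretely, take $r=r_9=5-2\sqrt{6}\approx 0.101$ and the \emph{tight} triangle with the small disc at $A$ and unit discs at $B$ and $C$ (this triangle occurs in $\mathcal{P}_9$ itself): then $\cos\hat{A}=1-2/(1+r)^2$, giving $\sin\hat{A}\approx 0.760$, whereas $\min\bigl(\tfrac{y}{x+2r+y},\tfrac{z}{x+2r+z}\bigr)=\tfrac{1}{1+3r}\approx 0.767$. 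Near-stretched $1r1$ triangles give similar violations for $r_6,\dots,r_9$. What does survive --- and what the lemma is actually used for, namely lower-bounding angles around a vertex so that the corona search in Prop.~\ref{prop:vertex_pot} is finite --- is the bound $\hat{A}\ge\arcsin\bigl(\min(\tfrac{y}{x+2r+y},\tfrac{z}{x+2r+z})\bigr)$, which is vacuously true when $\hat{A}>\pi/2$. So the right way to finish your argument is not to rescue the bound on $\sin\hat{A}$ in the obtuse case, but to restate the conclusion as a lower bound on the angle $\hat{A}$ itself and observe that the obtuse case then needs no further work.
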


\begin{proof}
Assume that the edge AB is shorter than AC.
On the one hand, the altitude of $T$ through B is at least $y$ because otherwise the disc sector defined by edges BA and BC would cross the edge AC.
On the other hand, the length of the edge AB is at most $x+2r+y$ because we can connect both A and B to the center of the support disc, which has radius at most $r$ since the packing is saturated.
This yields $\sin\hat{A}\geq\tfrac{y}{x+2r+y}$.
The same holds exchanging $y$ and $z$ if AC is shorter than AB, whence the claimed lower bound.
\end{proof}

Following \cite{Hep03}, we call a triangle whose discs are mutually adjacent a {\em tight triangle}  (Fig.~\ref{fig:tight}).
In particular, the FM-triangulation of a compact packing is made of tight triangles only and coincide with the contact graph.
Indeed, each hole in a compact disc packing is bounded by three mutually adjacent discs which thus define a tight triangle, and inserting in this hole a disc tangent to the three discs yields a support disc of this triangle which is thus feasible according to Prop.~\ref{prop:support_disc}.

\begin{figure}[hbt]
\centering
\includegraphics[width=\textwidth]{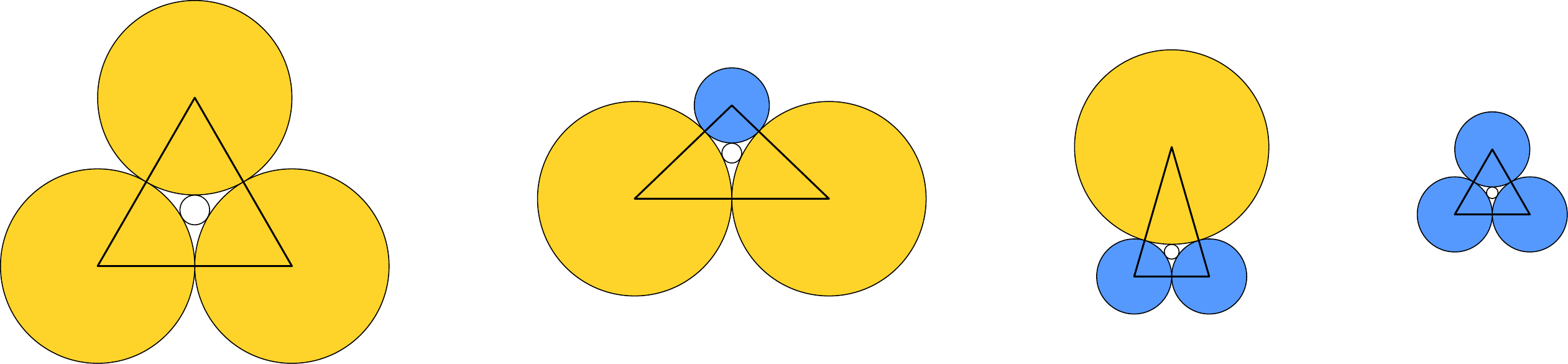}
\caption{The four types of tight triangles, with their support disc (in white).}
\label{fig:tight}
\end{figure}

Still following \cite{Hep03}, we call a triangle with a small disc tangent to both the two other discs, as well as to the line which passes through their centers, a {\em stretched triangle} (Fig.~\ref{fig:stretched}).
They are not feasible in a saturated packing because their support disc have radius $r$ (this would allow to add a small disc), but they can be arbitrarily approached near by feasible triangles.
We shall see in Section~\ref{sec:global_edge} that stretched triangles are dangerous because they can be as dense as tight triangles.
Indeed, two stretched triangles adjacent along their ``stretched edge'' (the edge tangent to one of the disc) can be recombined into two tight triangles by flipping this stretched edge.

\begin{figure}[hbt]
\centering
\includegraphics[width=\textwidth]{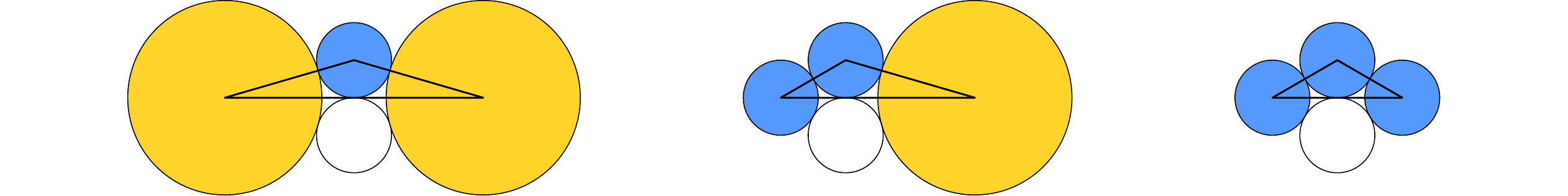}
\caption{The three types of stretched triangles, with their support discs.}
\label{fig:stretched}
\end{figure}

\section{Global inequality for the vertex potential}
\label{sec:global_vertex}

We shall here define the vertex potential and show it satisfies Inequality~\eqref{eq:vertex}.

\subsection{Vertex potential in tight triangles}
\label{sec:vertex_pot_tight}

For the sake of simplicity, we search for a vertex potential which depends only on the radii of the disc in the vertex and the radii of the two other discs in the triangle.
We denote by $V_{abc}$ (or $V_{cba}$) the potential in the center of the disc of radius $b$ in a tight triangle with discs of radius $a$, $b$ and $c$.
There are thus $6$ quantities to be defined:
$$
V_{111},\quad V_{rrr},\quad V_{r1r},\quad V_{1rr},\quad V_{1r1},\quad V_{11r}.
$$
We are going to impose constraints that these potentials will have to satisfy.

First, we impose that the sum of the vertex potential in each tight triangle is equal to its emptiness.
This yields four equations on the $V_{abc}$'s:
$$
3V_{111}=E_{111},\quad
3V_{rrr}=E_{rrr},\quad
V_{r1r}+2V_{1rr}=E_{1rr},\quad
V_{1r1}+2V_{11r}=E_{11r},
$$
where $E_{abc}$ i   s the emptiness of a tight triangle with discs of radius $a$, $b$ and $c$.

Second, we impose that, around any vertex $v$ of the target packing, Inequality~\eqref{eq:vertex} is an equality.
This thus yields an equation for each configuration around a vertex of the target packings.
Remarkably, there is only one equation for each radius of disc in each target packing, except for $\mathcal{P}_5$ where a small disc can be surrounded in two different ways.
This latter case is a bit specific and will be dealt with in Sec.~\ref{sec:c5}.
Tab.~\ref{tab:coronas_eq} lists these equations.

\begin{table}[hbt]
\centering
\begin{tabular}{c|cc}
$i$ & ~\hfill Small disc \hfill~ & ~\hfill Large disc \hfill~\\
\hline
$1$ & $3V_{1r1}+2V_{1rr}$ & $6V_{11r}+V_{r1r}$\\
$2$ & $V_{rrr}+2V_{1r1}+2V_{1rr}$ & $4V_{11r}+2V_{111}+V_{r1r}$\\
$3$ & $4V_{1rr}+V_{1r1}$ & $4V_{r1r}+4V_{11r}$\\
$4$ & $4V_{1r1}$ & $8V_{11r}$\\
$6$ & $V_{rrr}+4V_{1rr}$ & $12V_{r1r}$\\
$7$ & $2V_{1rr}+2V_{1r1}$ & $8V_{11r}+2V_{r1r}$\\
$8$ & $3V_{1r1}$ & $12V_{11r}$\\
$9$ & $V_{rrr}+2V_{1rr}+V_{1r1}$ & $12V_{11r}+6V_{r1r}$\\
\end{tabular}
\caption{
Quantities that must be zero to have equality in Inequality~\eqref{eq:vertex} around the center of a small or a large disc in the target packings depicted in Fig.~\ref{fig:targets}.}
\label{tab:coronas_eq}
\end{table}

We thus have six equations for each target packing (one in each of the four tight triangles and one around each of the two discs).
They are actually not independent because the sum of the emptiness of tight triangles over the fundamental domain of each target packing is equal to zero.
There is thus still one degree of freedom.
We arbitrarily set $V_{r1r}:=0$, except for $\mathcal{P}_6$ where we set $V_{1r1}:=0$ because $V_{r1r}=0$ is already enforced around a large disc.
A computation (performed in the joined program) shows that, in each case, these $6$ equations are independent.
All the $V_{abc}$'s are thus now fully determined.
When such potentials are used in a computer calculation, an interval containing the exact value will be used.
These numerical values being ugly and quite numerous ($6\times 9$), we do not list them here (they are moreover quite simple to compute from the above equations).

\subsection{Vertex potential in any triangle}
\label{sec:vertex_pot}

We shall now define the vertex potential in {\em any} triangle.
The idea is to modify the potential of a tight triangle depending on how much the triangle itself is deformed.
Given a triangle $T$ in a FM-triangulation of a disc packing, we denote by $T^*$ the tight triangle obtained in contracting the edges until the three discs become mutually tangent (such a triangle is always defined because if $r_a$, $r_b$ and $r_c$ are the radii of the discs, then the edge lengths are $r_a+r_b$, $r_b+r_c$ and $r_a+r_c$ and each of these length is greater than the sum of the two other ones).

\begin{definition}
\label{def:vertex_pot}
Let $v$ be a vertex of a triangle $T$.
Let $q$ be the radius of the disc of center $v$ and $x$ and $y$ the radii of the two other discs of $T$.
The {\em vertex potential} $U_v(T)$ of $v$ is defined by
\[
U_v(T):=V_{xqy}+m_q|\widehat{v}(T)-\widehat{v}(T^*)|,
\]
where $m_q\geq 0$ depends only on $q$, and $\widehat{v}(T)$ and $\widehat{v}(T^*)$ denote the angle in $v$ in $T$ and $T^*$.
\end{definition}

In particular, $U_v(T^*)=V_{xqy}$.
The constant $m_q$ controls the ``deviation'' in term of the angle changes between $T$ and $T^*$.
The point is to fix it so that the inequality~\eqref{eq:vertex} holds:

\begin{proposition}
\label{prop:vertex_pot}
Let $i\neq 5$ and $v$ be a vertex of an FM-triangulation of a saturated packing by discs of radius $1$ and $r_i$.
Then, the sum of the vertex potentials of the triangles containing $v$ is nonnegative provided that $m_1$ and $m_r$ are bounded from below by the values given in Tab.~\ref{tab:m}.
\end{proposition}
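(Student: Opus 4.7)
The plan is to reduce the corona sum at $v$ to a quantity depending only on the cyclic sequence of neighbouring disc radii, and then to enumerate and verify by computer. Write $T_1,\ldots,T_k$ for the triangles of $\mathcal{T}$ containing $v$ (cyclically ordered), let $q\in\{1,r_i\}$ be the radius of the disc centred at $v$, and let $x_j,y_j\in\{1,r_i\}$ be the radii of the other two discs of $T_j$. Set $\alpha_j:=\widehat{v}(T_j^*)$, which depends only on $x_j,q,y_j$. Since the angles around $v$ sum to $2\pi$, the reverse triangle inequality $\sum_j|a_j|\geq|\sum_j a_j|$ applied to $a_j:=\widehat{v}(T_j)-\alpha_j$ gives
\[
\sum_{T\in\mathcal{T}\,:\,v\in T}U_v(T)\;\geq\;\sum_j V_{x_j q y_j}\,+\,m_q\,\Bigl|\,2\pi-\sum_j\alpha_j\,\Bigr|,
\]
whose right-hand side depends only on the cyclic radius sequence $(x_j,y_j)_j$, henceforth called the \emph{combinatorial corona} at $v$.

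Next, Lemma~\ref{lem:min_angles} provides a uniform positive lower bound $\alpha_{\min}$ on every $\widehat{v}(T_j)$, depending only on $q$ and $r_i$, so $k\leq 2\pi/\alpha_{\min}$ and the set of combinatorial coronas at a vertex of radius $q$ is finite. I plan to enumerate them all by computer, compute in interval arithmetic the quantities $S:=\sum_j V_{x_j q y_j}$ and $D:=\bigl|2\pi-\sum_j\alpha_j\bigr|$, and check $S+m_q D\geq 0$, which amounts to $m_q\geq -S/D$ whenever $S<0$. The threshold $m_q$ recorded in Tab.~\ref{tab:m} is then taken to be at least the maximum of these lower bounds over all combinatorial coronas at a vertex of radius $q$ in a packing of target ratio $r_i$.

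For the coronas actually realised in the target packing $\mathcal{P}_i$, one has $S=0$ by the defining equations of the $V_{abc}$'s from Tab.~\ref{tab:coronas_eq} (Sec.~\ref{sec:vertex_pot_tight}) and $D=0$ because the triangles are tight, so the inequality degenerates to an equality; this is exactly what makes the whole scheme consistent. For every other corona $D>0$ holds, and the computed ratio $-S/D$ is finite, so the maximum above is well-defined.

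The main obstacle I expect is computational: for the smallest values of $r_i$ the lower bound from Lemma~\ref{lem:min_angles} allows $k$ up to roughly $20$, hence tens of millions of candidate sequences in $\{1,r_i\}^k$ if one enumerates naively. Pruning will be essential; a natural idea is to grow the sequences incrementally and abort any branch whose partial sum $\sum_j\alpha_j$ already exceeds $2\pi$ by more than what the remaining slots can absorb, or for which the partial $S+m_q D$ is already positive by a margin that no further negative contribution can destroy. Finally, the assumption $i\neq 5$ rules out the only target packing in which a small disc admits two non-equivalent coronas, a degeneracy that requires one extra equation among the $V_{abc}$'s and is handled separately in Sec.~\ref{sec:c5}.
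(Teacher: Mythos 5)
Your reduction is exactly the paper's: the reverse triangle inequality together with $\sum_j\widehat{v}(T_j)=2\pi$ turns the corona sum into $S+m_qD$ with $S=\sum_jV_{x_jqy_j}$ and $D=\bigl|2\pi-\sum_j\widehat{v}(T_j^*)\bigr|$, Lemma~\ref{lem:min_angles} makes the number of combinatorial coronas finite, and the rest is an interval-arithmetic enumeration. The one step that would fail is your assertion that $D>0$ for every corona other than the target one, so that $-S/D$ is always a finite, computable threshold. This is unjustified and cannot be relied on in the verification: there can be non-target cyclic sequences of neighbours whose tight angles also sum to exactly $2\pi$ (or so nearly that interval arithmetic cannot separate the computed interval for $D$ from zero), and for those the division $-S/D$ returns an unbounded interval and certifies nothing. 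The paper's proof makes this case distinction explicitly: whenever the interval for the coefficient of $m_q$ contains zero, it abandons the ratio and instead checks $\sum_jU_v(T_j^*)\geq 0$ directly; the computation shows this holds for all such coronas except the target one, where the interval for $S$ also contains zero and one falls back on the exact algebraic fact $S=0$ enforced by the equations of Tab.~\ref{tab:coronas_eq}. Without this fallback your enumeration either divides by an interval containing zero or silently reports an infinite lower bound for $m_q$.

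Two secondary points. First, your size estimate is off: Lemma~\ref{lem:min_angles} allows $k$ up to $80$ (for $i=9$, a large disc surrounded by small ones), not roughly $20$, which makes pruning not merely helpful but indispensable. Second, your first pruning criterion, based on the partial sum of the \emph{tight} angles $\widehat{v}(T_j^*)$ exceeding $2\pi$, is not by itself a feasibility obstruction, since the actual angles $\widehat{v}(T_j)$ can be smaller than the tight ones; the sound version is the one you hint at next, namely that once $\sum_j\widehat{v}(T_j^*)>2\pi$, appending a triangle changes $S+m_qD$ by $V_{xqy}+m_q\widehat{v}(T^*)$, which is nonnegative exactly under condition~\eqref{eq:mq_for_capping} --- a condition the paper verifies for $m_r$ but which fails for the cases where $m_1=0$ in Tab.~\ref{tab:m}, so it must be checked rather than assumed. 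These are implementation matters; the missing treatment of the $D=0$ coronas is the genuine gap.
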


\begin{table}[hbt]
\centering
\begin{tabular}{c|cc}
$i$ & ~\hfill $m_1$ \hfill~ & ~\hfill $m_r$ \hfill~\\
\hline
$1$ & $0$ & $0.0005$\\
$2$ & $0.16$ & $0.087$\\
$3$ & $0$ & $0.00028$\\
$4$ & $0$ & $0.0021$\\
$6$ & $0.0091$ & $0.0021$\\
$7$ & $0$ & $0.0010$\\
$8$ & $0$ & $0.0020$\\
$9$ & $0$ & $0.002058$\\
\end{tabular}
\caption{
Lower bounds on $m_1$ and $m_r$ which ensure the vertex inequality~\eqref{eq:vertex} for any packing by discs of radius $1$ and $r_i$.
}
\label{tab:m}
\end{table}

\begin{proof}
Let $v$ be a vertex of an FM-triangulation $\mathcal{T}$ of a saturated packing by discs of radius $1$ and $r$.
Let $q$ denote the radius of the disc of center $v$.
Let $T_1,\ldots,T_k$ be the triangles of $\mathcal{T}$ which contain $v$, ordered clockwise around $v$.
We have:
\begin{eqnarray*}
\sum_{j=1}^k U_v(T_j)
&=&\sum_{j=1}^{k} U_v(T^*_j)+m_q\sum_{j=1}^k|\widehat{v}(T_j)-\widehat{v}(T^*_j)|\\
&\geq& \sum_{j=1}^k U_v(T^*_j)+m_q\left|\sum_{j=1}^k\widehat{v}(T_j)-\sum_{j=1}^k\widehat{v}(T^*_j)\right|.
\end{eqnarray*}
Since the $T_j$'s surround $v$, $\sum_j\widehat{v}(T_j)=2\pi$.
If the coefficient of $m_q$ is nonzero, then the inequality~\eqref{eq:vertex} is thus satisfied in $v$ as soon as
\[
m_q\geq -\frac{\sum_j U(T^*_j)}{\left|2\pi-\sum_j\widehat{v}(T^*_j)\right|}.
\]
This lower bound depends only on the radii and order of the discs centered on the neighbors of $v$.
There is only finitely many cases for each value of $k$, and the lower bounds on angles of Lemma~\ref{lem:min_angles} ensure that there is finitely many values of $k$ (the largest one is $k=80$, reached for $i=9$ when there are only small discs around a large one\footnote{We can actually reduce further the number of cases to consider by bounding from below the angle of a triangle depending on the discs of this triangles. It is however only useful to speed up the search, because the cases that give the lower bound on $m_q$ correspond to rather small values of $k$.}).
We can thus perform an exhaustive search on a computer to find a lower bound which holds for any $v$.
We performed this exhaustive search in the function \verb+smallest_m+ in \verb+binary.sage+).
To conclude, we also have to consider the case where $m_q$ has a zero coefficient.
This happens when the sum of the angles $\widehat{v}(T^*_j)$ is equal to $2\pi$.
We check this during the previous exhaustive search: if the computation yields for the coefficient of $m_q$ an interval which contains zero, then we check whether $\sum_j U(T^*_j)\geq 0$.
The computation shows that this always holds, except when $v$ is surrounded in the same way as in the target packing.
In this latter case, we get an interval which contains zero: this is the way we defined the vertex potential in tight triangles (namely, to satisfy the equations in Tab.~\ref{tab:coronas_eq}) which theoretically ensures that the exact value is zero.
\end{proof}

\subsection{The case $\mathcal{P}_5$}
\label{sec:c5}
 
We cannot proceed exactly the same way for $\mathcal{P}_5$, because equality in Inequality~\eqref{eq:vertex} around the small disc surrounded by six other small discs would yield $V_{rrr}=0$.
Since $E_{rrr}=3V_{rrr}$ (see Subsec.~\ref{sec:vertex_pot_tight}), this would yield $E_{rrr}=0$.
But since the density $\delta_5$ is larger than the density $\tfrac{\pi}{2\sqrt{3}}$ of the hexagonal compact packing, the emptiness $E_{rrr}$ must be positive.
To get around this problem, the potential in a vertex of a triangle with three small discs will depend on its neighborhood.

In an FM-triangulation of a packing by discs of radius $1$ and $r=r_5$, a small disc surrounded by two large discs and three small ones (in this order up to a cyclic permutation) is said to be {\em singular}.
The other small discs are said to be {\em regular}.
In particular, each regular small disc in the target packing $\mathcal{P}_5$ is surrounded by $6$ singular discs (Fig.~\ref{fig:targets}).
We shall rely on the following simple lemma:

\begin{lemma}
\label{lem:c5}
In an FM-triangulation of a packing by discs of radius $1$ and $r$, there is at most two singular discs in a triangle with three discs of radius $r$.
\end{lemma}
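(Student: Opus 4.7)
The plan is to suppose for contradiction that all three vertices of the triangle $T=ABC$ are singular, and derive an incompatibility; the lemma then follows since at most two of them can be singular.

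First I would argue that $T$ must be tight. If $A$ is singular then, since $ABC$ is a triangle of the FM-triangulation, $B$ and $C$ are corona-adjacent at $A$ and must occupy two neighboring positions in the three-small block of the $LLSSS$ corona of $A$. In particular $|AB|=|AC|=2r$ while $|BC|\geq 2r$, so the law of cosines gives $\cos\widehat{A}=1-|BC|^{2}/(8r^{2})\leq 1/2$, i.e., $\widehat{A}\geq 60^\circ$, with equality iff $B$ and $C$ are mutually tangent. Applying this at each vertex and using $\widehat{A}+\widehat{B}+\widehat{C}=\pi$ forces all three angles to equal $60^\circ$. Hence $T$ is tight: an equilateral triangle of side $2r$ with $A$, $B$, $C$ mutually tangent.

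Next I would exploit the combinatorial rigidity of the singular pattern. Orient $T$ counterclockwise. At each vertex $V\in\{A,B,C\}$ the other two vertices of $T$ fill either the positions $(S_1,S_2)$ or $(S_2,S_3)$ of the three-small block of $V$'s $LLSSS$ corona, so there are two orientation choices per vertex. A direct inspection of the cyclic pattern $LLSSS$ shows that these two choices produce opposite pairs on the two edges of $T$ meeting at $V$: one choice yields a large outward neighbor on the CCW-outgoing edge of $T$ at $V$ and a small outward neighbor on the CCW-incoming edge, while the other choice yields the reverse.

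To close the argument I would use edge consistency. The third vertex of the FM-triangle glued to $T$ across any edge must be read the same from both of its endpoints, so the outward-type on the CCW-outgoing side at one endpoint must equal the outward-type on the CCW-incoming side at the other. Encoding the two orientations at each vertex by signs $\pm$, this forces the signs at the two endpoints of each edge of $T$ to be opposite. But $T$ is a $3$-cycle, and an odd cycle cannot carry an alternating $\pm$-labeling; this is the desired contradiction. The main obstacle is really the second step, namely setting up the corona bookkeeping cleanly; once a consistent CCW convention for $T$ and for each corona is fixed, the remainder reduces to the short parity argument illustrated by Fig.~\ref{fig:lemma_c5}.
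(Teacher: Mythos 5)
Your core argument is correct, and it takes a genuinely different route from the paper. The paper's proof is entirely contained in Fig.~\ref{fig:lemma_c5} and its caption: it assumes two vertices of the $rrr$-triangle are singular and inspects the corona of the third vertex directly, observing that the window of consecutive neighbours forced around it is incompatible with the pattern $LLSSS$. You instead assume all three vertices are singular and run a parity argument: each singular vertex admits exactly two ways of placing the ordered pair of its co-vertices inside the three-small block of its $LLSSS$ corona; the two choices assign opposite large/small types to the two ``outward'' discs across the two incident edges of $T$; and consistency of the outward disc across each edge then forces an alternating $\pm$-labelling of a $3$-cycle, which is impossible. I checked the sign bookkeeping and it works. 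Your version is symmetric in the three vertices and avoids the case analysis on where the third vertex sits inside the two small blocks, which is a modest gain in cleanliness over the paper's one-line inspection.

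However, your first paragraph is wrong and, fortunately, unused. ``Surrounded by'' in the definition of a singular disc refers to the corona in the FM-triangulation, and neighbours in an FM-triangulation need not be tangent: by the saturation bound, an edge joining discs of radii $x$ and $y$ has length anywhere in $[x+y,\,x+y+2r]$. So singularity of $A$ does not give $|AB|=|AC|=2r$, the law-of-cosines bound $\widehat{A}\geq 60^\circ$ fails (for instance $|AB|=|AC|=4r$ and $|BC|=2r$ give $\widehat{A}<30^\circ$), and $T$ need not be tight. Since the parity argument in your remaining paragraphs nowhere uses tightness, the fix is simply to delete the first paragraph; as written, it is the one step that would not survive refereeing.
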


\begin{proof}
Assume that there is a triangle with three singular small discs (shaded triangle in Fig.~\ref{fig:lemma_c5}) and let us get a contradiction.
Consider a first disc (in dark grey in Fig.~\ref{fig:lemma_c5}): since it is singular it is surrounded by discs of radius 1,1,r,r,r (in this order), and since it already has two neighbors of radius r, all its neighbors are uniquely defined up to the orientation.
Consider now a second disc, say the one with known neighbors 1,r,r (in light grey in Fig.~\ref{fig:lemma_c5}): since it is singular, it is surrounded by discs of radius 1,1,r,r,r, and the known neighbors leave only one way to arrange the two unknown neighbors.
This yields known neighbors r,r,r,r for the third disc of the triangle, which is incompatible with being singular.
\end{proof}

\begin{figure}[hbt]
\centering
\includegraphics[width=0.4\textwidth]{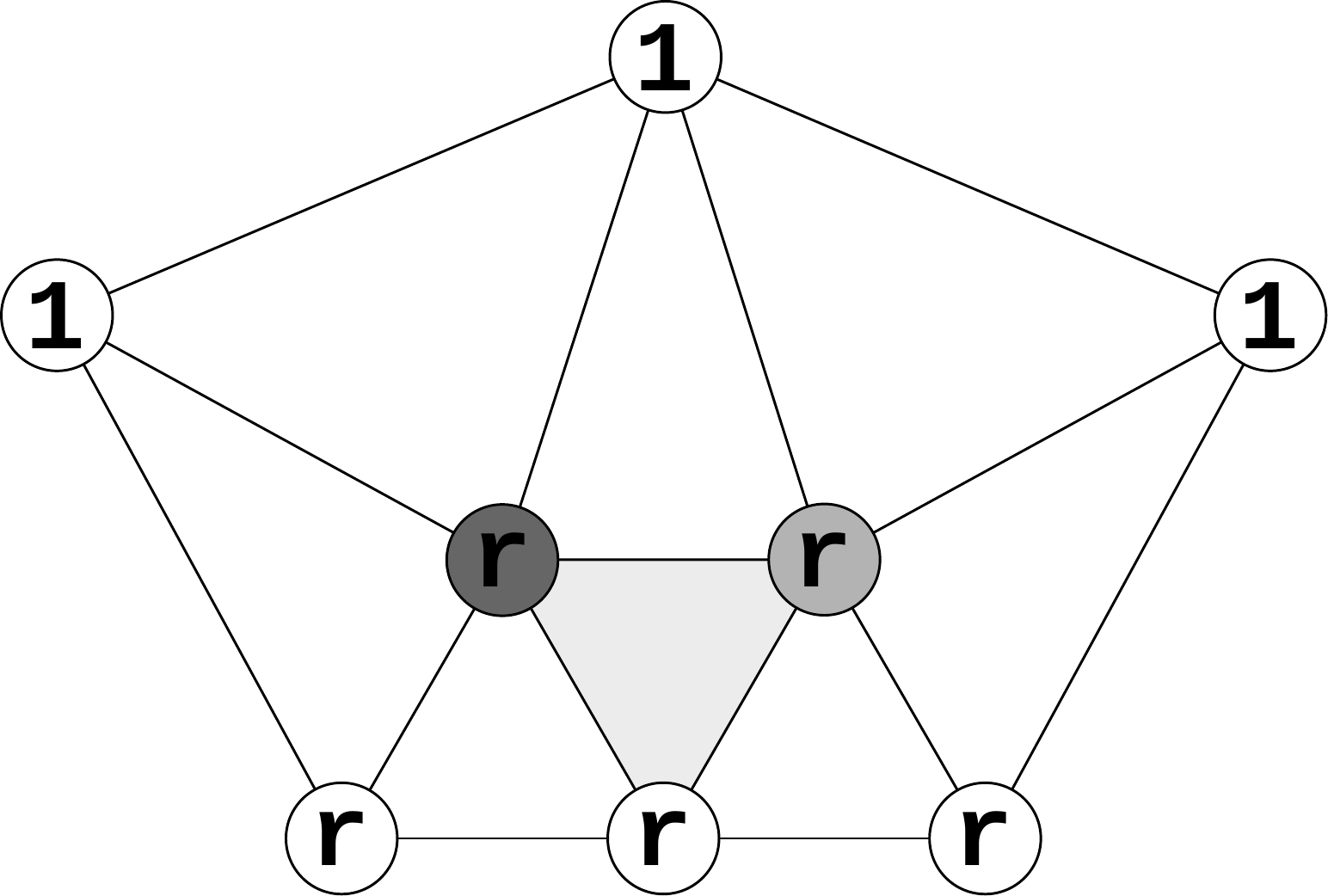}
\caption{
A triangle cannot have three singular discs (Lem.~\ref{lem:c5}).
}
\label{fig:lemma_c5}
\end{figure}

In a tight triangle with small discs, we denote respectively by $V'_{rrr}$ and $V_{rrr}$ the potentials of singular and regular vertices.
We set
$$
V'_{rrr}:=\tfrac{1}{2}E_{rrr}.
$$
The above lemma ensures that the sum over the singular vertices of the triangle is at most $E_{rrr}$.
The remaining potential (to sum up to $E_{rrr}$ on the triangle) is shared equally among the $k\geq 1$ regular vertices.
The value of $V_{rrr}$ thus depends on the number of singular vertices in the triangle: it can be $0$, $\tfrac{1}{4}E_{rrr}$ or $\tfrac{1}{3}E_{rrr}$.
In particular, it is always nonnegative.
While checking Inequality~\eqref{eq:vertex} around regular vertices, we shall only assume $V_{rrr}\geq 0$.
We can further proceed as for the other cases to define the $V_{abc}$'s, with $V'_{rrr}$ instead of $V_{rrr}$ and considering only singular vertices.
Tab.~\ref{tab:coronas_eq_c5} completes Tab.~\ref{tab:coronas_eq}.

\begin{table}[hbt]
\centering
\begin{tabular}{c|cc}
$i$ & Small disc (singular) & Large disc\\
\hline
$5$ & $V_{1r1}+2V_{1rr}+2V'_{rrr}$ & $6V_{11r}+3V_{r1r}$\\
\end{tabular}
\caption{
Quantities that must be zero to have equality in Inequality~\eqref{eq:vertex} around the center of a small or a large disc in in $\mathcal{P}_5$.}
\label{tab:coronas_eq_c5}
\end{table}

We can then extend vertex potentials beyond tight triangles exactly as in Definition~\ref{def:vertex_pot}, since the regular or singular character of a small disc is defined for any triangle.
Tab.~\ref{tab:m_c5} completes Tab.~\ref{tab:m} to extend Proposition~\ref{prop:vertex_pot}, which is proven in the same way, with the only difference being that in the exhaustive search through possible configurations around a vertex $v$, we simply use $V_{rrr}\geq 0$ if $v$ is not singular (since knowing only the neighbors of $v$ not always suffice to determine which of them are singular or regular).

\begin{table}[hbt]
\centering
\begin{tabular}{c|cc}
$i$ & ~\hfill $m_1$ \hfill~ & ~\hfill $m_r$ \hfill~\\
\hline
$5$ & $0$ & $0.0473$\\
\end{tabular}
\caption{
Lower bounds on $m_1$ and $m_r$ which ensure the vertex inequality~\eqref{eq:vertex} for any packing by discs of radius $1$ and $r_5$.
}
\label{tab:m_c5}
\end{table}

\section{Capping the potential}
\label{sec:capping}

In Subsec.~\ref{sec:vertex_pot_tight}, we fixed the vertex potentials in order to have $E(T)=U(T)$ on the tight triangles.
We then introduced, in Subsec.~\ref{sec:vertex_pot}, a deviation controlled by the quantities $m_1$ and $m_r$ to have Ineq.~\eqref{eq:vertex} around each vertex of any FM-triangulation of a saturated packing.
More precisely, we found lower bounds on $m_1$ and $m_r$: any largest values would only make this latter inequality even more true.
However, we shall keep in mind that we also have to eventually satisfy the local inequality~\eqref{eq:local}, {\em i.e.}, $U(T)\leq E(T)$ for any triangle.
With this in mind, it is best to fix $m_1$ and $m_r$ as small as possible so as to minimize $U$.
We can actually make $U$ even smaller as follows.

\begin{proposition}
Assume that, for any tight triangle $T^*$ with a disc of radius $q$ in $v$, one has
\begin{equation}
\label{eq:mq_for_capping}
m_q\geq -\frac{U_v(T^*)}{\widehat{v}(T^*)}.
\end{equation}
Then the vertex inequality~\eqref{eq:vertex} still holds if we cap the vertex potential $U_v(T)$ by
$$
z_q:=-2\pi \min \tfrac{U_v(T^*)}{\widehat{v}(T^*)},
$$
where the minimum is over the tight triangles $T^*$ with a disc of radius $q$ in $v$.
\end{proposition}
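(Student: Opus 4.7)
The plan is to exploit the auxiliary quantity $U_v^{\mathrm{cap}}(T)+K_0\widehat{v}(T)$, where $K_0:=z_q/(2\pi)$ and I write $V_{xqy}$ for $U_v(T^*)$ as in Section~\ref{sec:vertex_pot_tight}. The hypothesis rewrites as $m_q\geq K_0$ together with $V_{xqy}+K_0\widehat{v}(T^*)\geq 0$ for every tight triangle $T^*$ at $v$, and in particular $z_q\geq 0$ since, by the normalization of Section~\ref{sec:vertex_pot_tight}, at least one of the $V_{xqy}$ at $v$ is $\leq 0$.

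First I would establish the pointwise bound
\begin{displaymath}
U_v(T)+K_0\widehat{v}(T)\ \geq\ V_{xqy}+K_0\widehat{v}(T^*)\ \geq\ 0
\end{displaymath}
for every triangle $T$ at $v$. The decisive inequality $m_q|\widehat{v}(T)-\widehat{v}(T^*)|\geq K_0(\widehat{v}(T^*)-\widehat{v}(T))$ is trivial when $\widehat{v}(T)\geq\widehat{v}(T^*)$ (its right-hand side is nonpositive) and follows from $m_q\geq K_0$ otherwise. Applying $\min(\cdot,z_q)$ then yields the two-speed refinement I actually need: $U_v^{\mathrm{cap}}(T)+K_0\widehat{v}(T)\geq 0$ always, with the stronger conclusion $U_v^{\mathrm{cap}}(T)+K_0\widehat{v}(T)\geq z_q$ as soon as the cap is triggered (since then $U_v^{\mathrm{cap}}(T)=z_q$ while $K_0\widehat{v}(T)\geq 0$).

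Summing this over the triangles $T_1,\dots,T_k$ meeting at $v$ and using $\sum_j\widehat{v}(T_j)=2\pi$ produces
\begin{displaymath}
\sum_j U_v^{\mathrm{cap}}(T_j)\ +\ z_q\ =\ \sum_j\bigl[U_v^{\mathrm{cap}}(T_j)+K_0\widehat{v}(T_j)\bigr],
\end{displaymath}
and a short dichotomy closes the proof. If some $T_{j_0}$ is capped, its contribution to the right-hand side is already $\geq z_q$ while all other terms are $\geq 0$, so the right-hand side is $\geq z_q$ and hence $\sum_j U_v^{\mathrm{cap}}(T_j)\geq 0$. If no triangle is capped, then $U_v^{\mathrm{cap}}=U_v$ everywhere and Proposition~\ref{prop:vertex_pot} provides $\sum_j U_v(T_j)\geq 0$, which combined with $K_0\cdot 2\pi=z_q$ again yields right-hand side $\geq z_q$.

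The main obstacle is locating the right aggregated quantity. Once one sees that $U_v^{\mathrm{cap}}(T)+K_0\widehat{v}(T)$ behaves nicely (nonnegative in general and at least $z_q$ exactly when the cap bites), the rest is bookkeeping, and the value $z_q=2\pi K_0$ is calibrated precisely so that one single capped triangle is enough to absorb the constant $K_0\cdot 2\pi$ produced by the angle-sum identity; any strictly larger cap would potentially break this balance.
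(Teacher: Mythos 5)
Your proof is correct and follows essentially the same route as the paper: your pointwise bound $U_v(T)+K_0\widehat{v}(T)\geq 0$ with $K_0=z_q/(2\pi)$ is exactly the paper's lower bound on the potential per radian, $U_v(T)/\widehat{v}(T)\geq\min\bigl(0,U_v(T^*)/\widehat{v}(T^*)\bigr)$, rewritten additively, and the concluding dichotomy (one capped triangle contributes $z_q$, which absorbs the at most $-z_q$ total deficit of the remaining triangles via the angle sum $2\pi$; otherwise fall back on Proposition~\ref{prop:vertex_pot}) is the same bookkeeping. The case split on the sign of $\widehat{v}(T)-\widehat{v}(T^*)$ using $m_q\geq K_0$ likewise mirrors the paper's argument.
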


\begin{proof}
Note that $z_q>0$ since at least one of the $U_v(T^*)$'s is negative in order to have equality in Ineq.~\eqref{eq:vertex} for the target packing.
We shall show the following lower bound on the vertex potential {\em per radian}:
$$
\frac{U_v(T)}{\widehat{v}(T)}\geq \min\left(0,\frac{U_v(T^*)}{\widehat{v}(T^*)}\right).
$$
The definition of $z_q$ then ensures that, as soon as a vertex potential $U_v(T)$ is larger than $z_q$, the vertex potentials of all the other triangles sharing $v$ are ``not enough negative to be dangerous'', namely, the sum of all the vertex potentials around $v$ is nonnegative, that is, Ineq.~\eqref{eq:vertex} holds.

Let us prove the claimed lower bound per radian.
It is trivial if $U_v(T^*)\geq 0$.
Assume $U_v(T^*)\leq 0$.
If $\widehat{v}(T)\geq\widehat{v}(T^*)$, then
$$
\frac{U_v(T)}{\widehat{v}(T)}
=\frac{U_v(T^*)+m_q(\widehat{v}(T)-\widehat{v}(T^*))}{\widehat{v}(T)}
\geq\frac{U_v(T^*)}{\widehat{v}(T)}
\geq\frac{U_v(T^*)}{\widehat{v}(T^*)}.
$$
If $\widehat{v}(T)\leq\widehat{v}(T^*)$, then
$$
\frac{U_v(T)}{\widehat{v}(T)}
=\frac{U_v(T^*)+m_q(\widehat{v}(T^*)-\widehat{v}(T))}{\widehat{v}(T)}
=\frac{U_v(T^*)+m_q\widehat{v}(T^*)}{\widehat{v}(T)}-1.
$$
Since $U_v(T^*)+m_q\widehat{v}(T^*)\geq 0$ by Ineq.~\eqref{eq:mq_for_capping}, the above quantity is a decreasing function of $\widehat{v}(T)$.
It is thus bounded from below by its value for $\widehat{v}(T)=\widehat{v}(T^*)$, {\em i.e.}, $U_v(T^*)/\widehat{v}(T^*)$.
\end{proof}

A computation shows that the values of $m_r$ given in Tab.~\ref{tab:m} and \ref{tab:m_c5} satisfy the condition~\eqref{eq:mq_for_capping}.
This also holds for $m_1$ in cases $c_2$ and $c_6$, but not in the other cases because $m_1$ is equal to zero.
However, the value required to satisfy the condition~\eqref{eq:mq_for_capping} is quite small: a computation shows that $10^{-14}$ suffices for all the cases.
We thus {\em modify} $m_1$ when its value in Tab.~\ref{tab:m} and \ref{tab:m_c5} is zero and increase it to $10^{-14}$.

Since only the negative $V_{abc}$'s play a role in the definition of $z_q$, the value of $V_{rrr}$ in the case $\mathcal{P}_5$, which can range from $0$ to $\tfrac{1}{2}E_{rrr}$ has no importance.
The values $Z_1$ and $Z_r$ listed in Tab.~\ref{tab:Z} bound from above the exact values $z_1$ and $z_r$.

\begin{table}[hbt]
\centering
\begin{tabular}{c|cc}
$i$ & $Z_1$ & $Z_r$\\
\hline
$1$ & $7.5\times 10^{-15}$ & $0.00023$\\
$2$ & $0.010$ & $0.0045$\\
$3$ & $1.7\times 10^{-14}$ & $0.00023$\\
$4$ & $4.89\times 10^{-15}$ & $0.00095$\\
$5$ & $1.04\times 10^{-16}$ & $0.0076$\\
$6$ & $0.00124$ & $0.0015$\\
$7$ & $8.92\times 10^{-15}$ & $0.0011$\\
$8$ & $8.0\times 10^{-15}$ & $0.0012$\\
$9$ & $2.327\times 10^{-14}$ & $0.0008032$
\end{tabular}
\caption{Values of $Z_1$ and $Z_r$ used to cap the vertex potentials.}
\label{tab:Z}
\end{table}

\section{Global inequality for the edge potential}
\label{sec:global_edge}

A few randomized trials suggest that the vertex potential satisfies the local inequality~\eqref{eq:local} for triangles which are not too far from tight triangles.
It however fails near stretched triangles, because the emptiness can become quite small.
A typical situation is depicted in Fig.~\ref{fig:plot}.
The edge potential aims to fix this problem.
The idea is that when a triangle $T$ becomes stretched, its support disc overlaps an adjacent triangle $T'$, imposing a void in $T'$ which increases the emptiness $E(T')$ and may counterbalance the decrease of $E(T)$.
We shall come back to this in Section~\ref{sec:local}.
Here, we define the edge potential and prove that it satisfies Inequality~\eqref{eq:edge}.

\begin{definition}
\label{def:edge_pot}
Let $e$ be an edge of a triangle $T$.
Denote by $d_e(T)$ the signed distance of the center $X$ of the support disc of $T$ to the edge $e$, which is positive if $T$ and $X$ are both on the same side of $e$, or negative otherwise.
The {\em edge potential} $U_e(T)$ of $e$ is defined by
\[
U_e(T):=\left\{\begin{array}{l}
\textrm{$0$ if the edge $e$ is shorter than $l_e$,}\\
\textrm{$q_e\cdot d_e(T)$ otherwise,}
\end{array}\right.
\]
where $l_e$ and $q_e$ are positive constants which depend only on the sizes of the discs centered on the endpoints of the edge $e$.
\end{definition}

\begin{figure}[hbt]
\centering
\includegraphics[width=\textwidth]{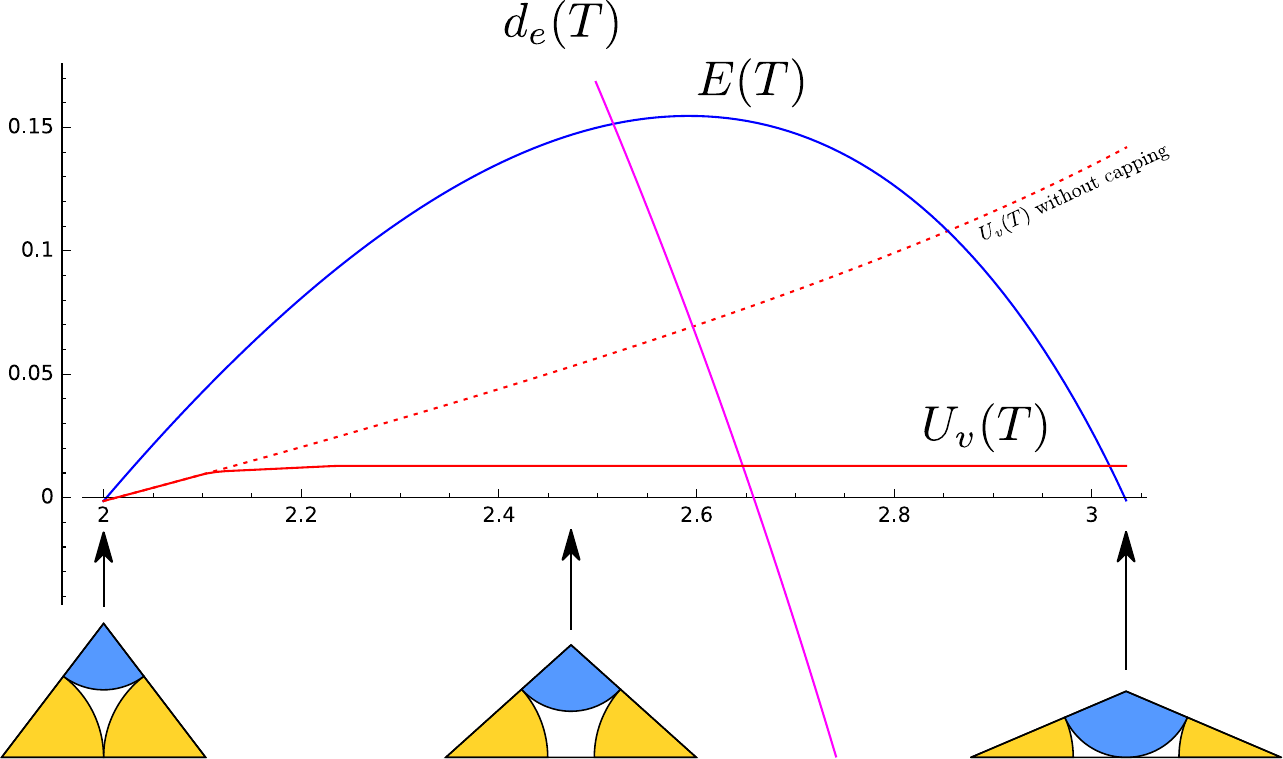}
\caption{
Starting from a tight triangle with two large disc and a small one (bottom left), the edge $e$  between the two large discs is elongated until we get a stretched triangle (bottom right).
The corresponding variations of the emptiness $E$, vertex potential $U_v$ and signed distance $d_e(T)$ are depicted (top).
For quasi-stretched triangle, the local inequality $E(T)\geq U_v(T)$ fails.
}
\label{fig:plot}
\end{figure}

The constant $l_e$ is the threshold below which $d_e$ has an effect and the coefficient $q_e$ controls the intensity of this effect.
As explained at the beginning of this section, the idea is that this edge potential will only come into play for triangles that are ``quite stretched''.
This is why the value of the threshold $l_e$ will in practice be much higher than the minimal length of the edge $e$ (obtained when the two disks are in contact), which can be seen in Table~\ref{tab:LQ}.
Nevertheless, whatever the value of this threshold, the edge potential always satisfies Inequality ~\eqref{eq:edge}:

\begin{proposition}
\label{prop:edge_pot}
If $e$ is an edge of an FM-triangulation of a disc packing, then the sum of the edge potentials of the two triangles containing $e$ is nonnegative.
\end{proposition}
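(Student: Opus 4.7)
The result is trivial when $|e|<l_{xy}$, since both edge potentials vanish by Definition~\ref{def:edge_pot}. I therefore focus on the case $|e|\geq l_{xy}$, where
\[
U_e(T_1)+U_e(T_2)=q_{xy}\bigl(d_e(T_1)+d_e(T_2)\bigr),
\]
and since $q_{xy}\geq 0$ it is enough to prove the geometric inequality $d_e(T_1)+d_e(T_2)\geq 0$. Thus the proposition reduces to a purely geometric statement about the positions of the two support-disc centers.

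For this geometric inequality, I would place $e=AB$ on the $x$-axis with $T_1$ above and $T_2$ below, and write $X_i=(x_i,y_i)$ for the support-disc center of $T_i$ with radius $\rho_i$; then the sum equals $y_1-y_2$. Both $X_i$ satisfy $|X_iA|-a=|X_iB|-b=\rho_i$, so they lie on the additively weighted bisector of the discs at $A$ and $B$, a conic symmetric about line $e$. A short computation yields the monotonicity formula
\[
y_i^2 = \frac{L^2-(a-b)^2}{4L^2}\bigl((2\rho_i+a+b)^2-L^2\bigr),
\]
so $|y_i|$ is a strictly increasing function of $\rho_i$ alone.

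The main input is then the empty-disc property of FM-triangulations: each support disc is interior-disjoint from every disc of the packing. Together with the tangencies $|X_iC_i|=\rho_i+r_{C_i}$ (where $C_i$ is the third vertex of $T_i$), this yields $|X_1C_2|\geq \rho_1+r_{C_2}$ and $|X_2C_1|\geq \rho_2+r_{C_1}$. I would conclude by a sign analysis on $y_1,y_2$: if $y_1\geq 0\geq y_2$ the inequality is immediate; if both $y_i$ have the same sign, the empty-disc inequalities force the appropriate ordering of $\rho_1,\rho_2$ and the monotonicity formula then gives $y_1\geq y_2$; and the remaining configuration $y_1<0<y_2$, in which both support discs sit on the opposite side of $e$ from their triangles, is ruled out because the resulting large support discs would unavoidably overlap the third vertex of the adjacent triangle. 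The main obstacle is this last step: turning the empty-disc constraints into the required quantitative control on $\rho_1,\rho_2$, which is delicate precisely in the near-stretched regime the edge potential is designed to handle.
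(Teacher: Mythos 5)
Your proposal follows the same route as the paper's proof: the reduction to the purely geometric inequality $d_e(T_1)+d_e(T_2)\geq 0$ (the case of a short edge being trivial and $q_{xy}\geq 0$), the observation that both support-disc centers lie on the branch of the hyperbola $\overline{MA}-a=\overline{MB}-b$ with foci the endpoints of $e$, and the fact that the distance to the focal axis is an increasing function of the support radius alone (your formula for $y_i^2$ is a correct explicit form of this, which the paper only uses qualitatively). The one step you flag as the ``main obstacle'' is, however, exactly the step the paper does prove, and it needs no quantitative control on $\rho_1,\rho_2$ at all. Assume $d_e(T_1)\leq 0$, i.e.\ $X_1$ lies on $T_2$'s side of $e$ (up to swapping indices this covers both of your remaining cases at once). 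By the empty-disc property the support disc of $T_1$ is interior-disjoint from the third disc of $T_2$, whereas the support disc of $T_2$ must be tangent to that disc; since both centers sit on the same branch, along which the tangent discs grow and sweep monotonically toward that third disc as the center recedes from the focal axis on $T_2$'s side, the tangent position $X_2$ must lie at least as far from the axis on that side as the disjoint position $X_1$. This single comparison gives $y_2\leq y_1\leq 0$, hence $y_1-y_2\geq 0$, and it excludes your configuration $y_1<0<y_2$ as a by-product rather than by a separate overlap argument.

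So the gap is genuine in your write-up but smaller than you fear: nothing ``delicate in the near-stretched regime'' is required, only the qualitative monotone-sweeping comparison of the two centers along the branch relative to one fixed third disc. One caution if you try to make that comparison fully analytic the way you suggest: the naive triangle-inequality bound gives $\rho'-\rho\leq |X'-X|$, which is the wrong direction for proving that the gap $|X(\rho)C_2|-\rho-r_{C_2}$ decreases along the branch, so ordering $\rho_1$ and $\rho_2$ by a direct estimate is less immediate than it looks; the paper's formulation in terms of positions on the branch (tangency forces $X_2$ farther from the focal axis than the disjoint $X_1$) is the cleaner way to close the argument.
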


\begin{proof}
Consider an edge $e$ shared by two triangles $T$ and $T'$ of an FM-triangulation.
We claim that $d_e(T)+d_e(T')\geq 0$.
If each triangle and the center of its support disc are on the same side of $e$, then it holds because both $d_e(T)$ and $d_e(T')$ are nonnegative.
Assume $d_e(T)\leq 0$, {\em i.e}, $T$ and the center of its support disc are on either side of $e$.
Denote by $A$ and $B$ the endpoints of $e$ and by $a$ and $b$ the radii of the discs of center $A$ and $B$ (Fig.~\ref{fig:edge_pot}).
The centers of the discs tangent to both discs of center $A$ and $B$ and radius $a$ and $b$ are the points $M$ such that $\overline{AM}-a=\overline{BM}-b$, {\em i.e.}, a branch of a hyperbola of foci $A$ and $B$.
This includes the centers $X$ and $X'$ of the support discs of $T$ and $T'$.
In order to be tangent to the third disc of $T'$, the support disc of $T'$ must have a center $X'$ farther than $X$ from the focal axis.
Since the distances of $X'$ and $X$ to this axis are $-d_e(T)$ and $d_e(T')$, this indeed yields $d_e(T)+d_e(T')\geq 0$.
This proves $U_e(T)+U_e(T')\geq 0$ if $e$ has length at least $l_e$.
If $e$ is shorter than $l_e$, both $U_e(T)$ and $U_e(T')$ are zero and their sum is thus also nonnegative.
\end{proof}

\begin{figure}[hbt]
\centering
\includegraphics[width=\textwidth]{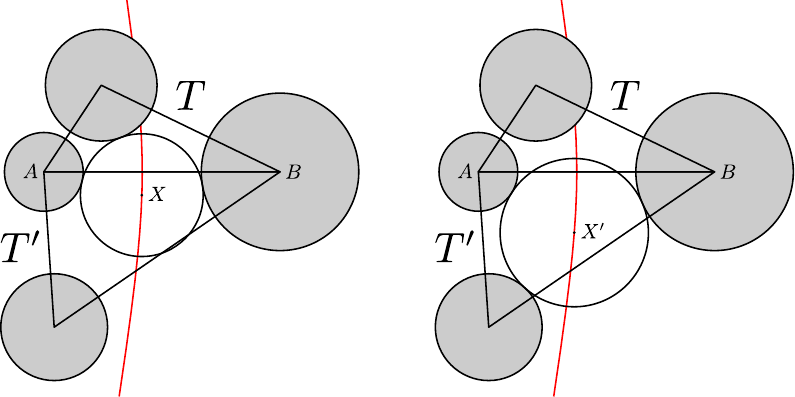}
\caption{
On the left, the support disc of the top triangle $T$ is centered at $X$.
On the right, the support disc of the bottom triangle $T'$ is centered at $X'$.
the hyperbola branch indicates the locations of the centers of the disks tangent to the two disks centered at A and B.
}
\label{fig:edge_pot}
\end{figure}

\section{Local inequality for $\varepsilon$-tight triangles}
\label{sec:local_tight}

We prove the local inequality~\eqref{eq:local} in a neighborhood of tight triangles.
A triangle is said to be $\varepsilon$-tight if its discs are pairwise at distance at most $\varepsilon$, that is, each edge between discs of radii $x$ and $y$ has length between $x+y$ and $x+y+\varepsilon$.

Let $T^*$ be a tight triangle with edge length $x_1$, $x_2$ and $x_3$ and denote by $\mathcal{T}_\varepsilon$ the set of $\varepsilon$-tight triangles with the same disc radii as $T^*$.
On the one hand, the variation $\Delta E$ of the emptiness $E$ between $T^*$ and any triangle in $\mathcal{T}_\varepsilon$ satisfies
\[
\Delta E\geq \sum_{1\leq i\leq 3} \min_{\mathcal{T}_\varepsilon}\frac{\partial E}{\partial x_i}\Delta x_i.
\]
One the other hand, the variation $\Delta U$ of the potential $U$ between $T^*$ and any triangle in $\mathcal{T}_\varepsilon$ satisfies:
\[
\Delta U\leq\sum_{1\leq i\leq 3} \max_{\mathcal{T}_\varepsilon}\frac{\partial U}{\partial x_i}\Delta x_i.
\]
Since $E(T^*)=U(T^*)$ by definition of the vertex potential (Subsection~\ref{sec:vertex_pot_tight}), the local inequality $E(T)\geq U(T)$ holds over $\mathcal{T}_\varepsilon$ for any $\varepsilon$ such that
\[
\min_{\mathcal{T}_\varepsilon}\frac{\partial E}{\partial x_i}\geq\max_{\mathcal{T}_\varepsilon}\frac{\partial U}{\partial x_i}.
\]
To compute the derivatives of $U$, it is convenient to assume that the threshold $l_e$ above which the edge potential comes into play is larger than $x+y+\varepsilon$, where $x$ and $y$ are the radii of the discs connected by the edge $e$.
Indeed, the potential $U$ is then equal to the vertex potential $U_v$ over $\mathcal{T}_\varepsilon$, and the computation is much simpler.
This assumption is largely satisfied in practice, as can be seen from the values given in Tables~\ref{tab:eps} and \ref{tab:LQ}\footnote{For example, for $x=y=1$ in the $i=1$ case, we have $x+y+\varepsilon=2.0078$ and $l_e=2.70$.}.
We computed  the formulas of the derivatives of $E$ and $U$ with SageMath\footnote{It can be done by hand since it mainly amounts to use the cosine theorem to express the angle of a triangle as a function of its edge length, but we are not particularly interested in the formulas.}.
We then once again use interval arithmetic to compute the extreme values over $\mathcal{T}_\varepsilon$: each variable $x_i$ is replaced by the interval $[r_j+r_k, r_j+r_k+\varepsilon]$, where $r_j$ and $r_k$ denote the radii of the discs centered on the endpoints of the edge of length $x_i$.
The computation yields:

\begin{proposition}
\label{prop:local_tight}
Let $i\in\{1,\ldots,9\}$.
Take for $m_1$ and $m_r$ the lower bound given in Tab.~\ref{tab:m} or \ref{tab:m_c5}.
Take for $Z_1$ and $Z_r$ the values given in Tab.~\ref{tab:Z}.
Then, the local inequality $E(T)\geq U(T)$ holds for any $\varepsilon$-tight triangle of an FM-triangulation of a saturated packing by discs of radius $1$ and $r_i$ provided that $\varepsilon$ satisfies the upper bound given in Tab.~\ref{tab:eps}.
\end{proposition}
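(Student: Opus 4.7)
The plan is to execute the strategy already sketched in the discussion preceding the statement. Fix $i$ and a tight triangle type $T^*$ (there are at most four, as in Fig.~\ref{fig:tight}). Any $\varepsilon$-tight $T$ has edges $x_j \in [r_a+r_b,\, r_a+r_b+\varepsilon]$, where $r_a,r_b$ are the radii at the endpoints of the $j$-th edge, so that $\Delta x_j := x_j - x_j^* \in [0,\varepsilon]$ in particular. Applying the mean value theorem along a coordinate-wise path in $\mathcal{T}_\varepsilon$ from $T^*$ to $T$ yields
\[
\Delta E \;\geq\; \sum_{j=1}^{3} \Delta x_j \cdot \min_{\mathcal{T}_\varepsilon}\frac{\partial E}{\partial x_j}, \qquad \Delta U \;\leq\; \sum_{j=1}^{3} \Delta x_j \cdot \max_{\mathcal{T}_\varepsilon}\frac{\partial U}{\partial x_j}.
\]
Since $E(T^*)=U(T^*)$ by the construction of the vertex potentials (first constraint of Subsec.~\ref{sec:constraints}), the local inequality $E(T)\geq U(T)$ on $\mathcal{T}_\varepsilon$ follows as soon as $\min \partial E/\partial x_j \geq \max \partial U/\partial x_j$ holds for each $j$ and each tight triangle type.

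To make this usable, I would first write explicit closed-form expressions, using Heron's formula for $\mathrm{area}(T)$ and the law of cosines for the sector angles entering $\mathrm{cov}(T)$, for $E$ as a function of the three edge lengths. For $U$, I would pick $\varepsilon$ small enough that every edge of $T\in\mathcal{T}_\varepsilon$ has length at most the minimum of the thresholds $l_{xy}$ entering Def.~\ref{def:edge_pot}, so that the edge potential vanishes identically on $\mathcal{T}_\varepsilon$. The potential $U$ then reduces to the sum of three vertex potentials as in Def.~\ref{def:vertex_pot}, each expressible through the edge lengths via the law of cosines. Capping by $Z_q$ can only \emph{decrease} $U$, so proving the inequality for the uncapped vertex potential suffices. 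The symbolic differentiation of these expressions is mechanical and I would delegate it to SageMath.

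I would then bound the derivatives on $\mathcal{T}_\varepsilon$ by interval arithmetic, substituting the interval $[r_a+r_b,\, r_a+r_b+\varepsilon]$ for each edge variable as described in Section~\ref{sec:computer}, and verifying $\min \partial E/\partial x_j \geq \max \partial U/\partial x_j$ for each $j$ and each tight triangle type. The values of $\varepsilon$ displayed in Tab.~\ref{tab:eps} would be precisely those for which this interval check succeeds. The case $i=5$ requires slightly more care: each small-disc vertex can be either singular or regular, and one runs the check with both $V_{rrr}$ and $V'_{rrr}$, taking the worse of the two bounds.

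The main obstacle is practical rather than conceptual. Interval arithmetic over the box $\mathcal{T}_\varepsilon$ overestimates the actual range of the partial derivatives, so if $\varepsilon$ is taken too large the certified lower bound on $\min \partial E/\partial x_j$ may drop below the certified upper bound on $\max \partial U/\partial x_j$ even when the true inequality still holds. On the other hand $\varepsilon$ should be chosen as large as possible, because the complementary task (Prop.~\ref{prop:local}) performs an exhaustive dichotomy on triangles that are \emph{not} $\varepsilon$-tight, and a smaller $\varepsilon$ makes that search considerably heavier. Calibrating $\varepsilon$ case by case to strike this balance, rather than the inequality itself, is the genuine implementation effort.
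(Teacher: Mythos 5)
Your proposal follows essentially the same route as the paper: bound $\Delta E$ from below and $\Delta U$ from above by the extremal partial derivatives over $\mathcal{T}_\varepsilon$ (with $\varepsilon$ small enough that the edge potential vanishes), use $E(T^*)=U(T^*)$ from the first constraint of Subsec.~\ref{sec:constraints}, and certify $\min\partial E/\partial x_j\geq\max\partial U/\partial x_j$ by interval arithmetic on the boxes $[r_a+r_b,\,r_a+r_b+\varepsilon]$. Your added observations (capping only decreases $U$; treating both singular and regular vertices for $i=5$) are correct and consistent with the paper's computation.
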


\begin{table}[hbt]
\centering
\begin{tabular}{c|c}
$i$ & $\varepsilon$ \\
\hline
$1$ & $0.078$\\
$2$ & $0.019$\\
$3$ & $0.060$\\
$4$ & $0.038$\\
$5$ & $0.0126$\\
$6$ & $0.026$\\
$7$ & $0.0186$\\
$8$ & $0.0048$\\
$9$ & $0.001717$\\
\end{tabular}
\caption{
Upper bounds on $\varepsilon$ which ensure Inequality~\eqref{eq:local} for $\varepsilon$-tight triangles.
}
\label{tab:eps}
\end{table}

\section{Local inequality for all the triangles}
\label{sec:local}

We here explicitly define an edge potential such that the local inequality~\eqref{eq:local} holds for any feasible triangle.
The following lemma completes the lemma~\ref{lem:min_angles} in order to eliminate as many as posible nonfeasible triangles during the computer check:

\begin{lemma}
\label{lem:non_feasible}
If a triangle $T$ appears in an FM-triangulation of a saturated packing by discs of radius $1$ and $r$, then
\begin{itemize}
\item the radius of its support disc is less than $r$;
\item its area is at least $\tfrac{1}{2}\pi r^2$;
\item for any vertex A of $T$, the altitude of $T$ through A is at least $r$.
\end{itemize}
\end{lemma}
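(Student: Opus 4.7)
The plan is to derive each of the three bullets from the two feasibility properties of Section~\ref{sec:triangulation} together with the saturation hypothesis; no new machinery is required.

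For the first bullet, the support disc of $T$ is, by definition of feasibility property~1, interior disjoint from every disc of the packing. If its radius were at least $r$, one could place a new disc of radius $r$ at the support disc's center without overlapping any disc of the packing, contradicting saturation. Hence the support disc has radius strictly less than $r$. This is already the observation made in the discussion following feasibility property~1.

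For the third bullet, I would apply feasibility property~2 at each vertex. Write $A$, $B$, $C$ for the vertices and $r_A,r_B,r_C\in\{r,1\}$ for the disc radii. The disc sector at $B$ bounded by edges $BA$ and $BC$ is contained in $T$, and hence does not meet the opposite edge $AC$. When the foot of the perpendicular from $B$ to the line through $AC$ lies on the segment, this forces the altitude from $B$ to be at least $r_B\geq r$ (the same reasoning already used in the proof of Lemma~\ref{lem:min_angles}). When the foot lies outside the segment, the distance from $B$ to segment $AC$ is instead bounded below by the edge length $|BA|$ or $|BC|$, which is itself at least $2r$. Either way, the altitude from $B$ is at least $r$, and the same argument applies to $A$ and $C$.

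For the second bullet, I would bound $\mathrm{area}(T)$ from below by the total area of the three vertex sectors sitting inside $T$. By feasibility property~2, the disc sector at each vertex is fully contained in $T$; since the three packing discs at $A,B,C$ are pairwise interior disjoint, so are these sectors. Denoting by $\alpha,\beta,\gamma$ the angles of $T$ at $A,B,C$, one gets
\[
\mathrm{area}(T)\;\geq\;\tfrac12\bigl(\alpha r_A^{2}+\beta r_B^{2}+\gamma r_C^{2}\bigr)\;\geq\;\tfrac{r^{2}}{2}(\alpha+\beta+\gamma)\;=\;\tfrac{\pi}{2}\,r^{2},
\]
using $r_A,r_B,r_C\geq r$ and $\alpha+\beta+\gamma=\pi$.

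I do not expect any real obstacle here; the only mildly delicate point is the case in the third bullet where the foot of perpendicular from a vertex falls outside the opposite segment, and that is immediately absorbed by the fact that every edge length already exceeds $2r$.
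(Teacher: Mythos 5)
Your treatment of the first two bullets matches the paper's: saturation rules out a support disc of radius at least $r$, and the three pairwise interior-disjoint vertex sectors, each contained in $T$ by feasibility property~2, give $\mathrm{area}(T)\geq\tfrac12\bigl(\alpha r_A^2+\beta r_B^2+\gamma r_C^2\bigr)\geq\tfrac{\pi}{2}r^2$. Both of these are correct and need no change.

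The third bullet, however, has a genuine gap precisely at the point you dismiss as "immediately absorbed". The altitude of a vertex $B$ is its distance to the \emph{line} through $A$ and $C$, not to the \emph{segment} $AC$; this is the quantity the lemma asserts to be at least $r$ (it is the one computed as $2\,\mathrm{area}(T)/|AC|$ when the lemma is invoked to discard boxes of triangles in the proof of Prop.~\ref{prop:local}), and it is the quantity the paper's own proof bounds. When the foot of the perpendicular from $B$ falls outside the segment $AC$, your bound $\min(|BA|,|BC|)\geq 2r$ controls the distance to the segment but says nothing about the distance to the line, which can be arbitrarily small for a flat obtuse triangle; so the conclusion "either way, the altitude from $B$ is at least $r$" does not follow. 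The paper closes this case by switching vertices: if the foot of the perpendicular from $B$ lies beyond $A$, then the angle at $A$ is obtuse, hence $BC$ is the longest side and the altitude of $A$, equal to $2\,\mathrm{area}(T)/|BC|$, is even smaller than that of $B$, so smaller than $r\leq r_A$; moreover the feet of the perpendicular from $A$ now lies \emph{inside} the segment $BC$ (the angles at $B$ and $C$ being acute), so the sector of the disc centered at $A$ crosses the edge $BC$, contradicting feasibility property~2. You need this vertex-switching step (or an equivalent reduction to the "foot inside the segment" case) to make the argument go through.
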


\begin{proof}
\begin{itemize}
\item If the support disc has radius $r$ or more, then we can add a small disc in the packing, in contradiction with the saturation hypothesis.
\item According to Prop.~\ref{prop:sector} (page \pageref{prop:sector}), the sectors defined by triangle edges of the discs centered in the triangle vertices are included in the triangle.
Their total area is at least half the area of a small disc, whence the claimed lower bound.
\item If the altitude through vertex A is less than $r$, then the sector of the disc centered in A crosses the line going through the two opposite vertices B and C.
It cannot crosses it between B and C (Prop.~\ref{prop:sector}).
Then, the altitude is smaller through the nearest vertex to A, say B, than through A itself.
The sector of the disc centered in B thus crosses the segment AC (Fig.~\ref{fig:non_feasible}).
This contradicts Prop.~\ref{prop:sector}
\end{itemize}
\end{proof}

\begin{figure}[hbt]
\centering
\includegraphics[width=0.6\textwidth]{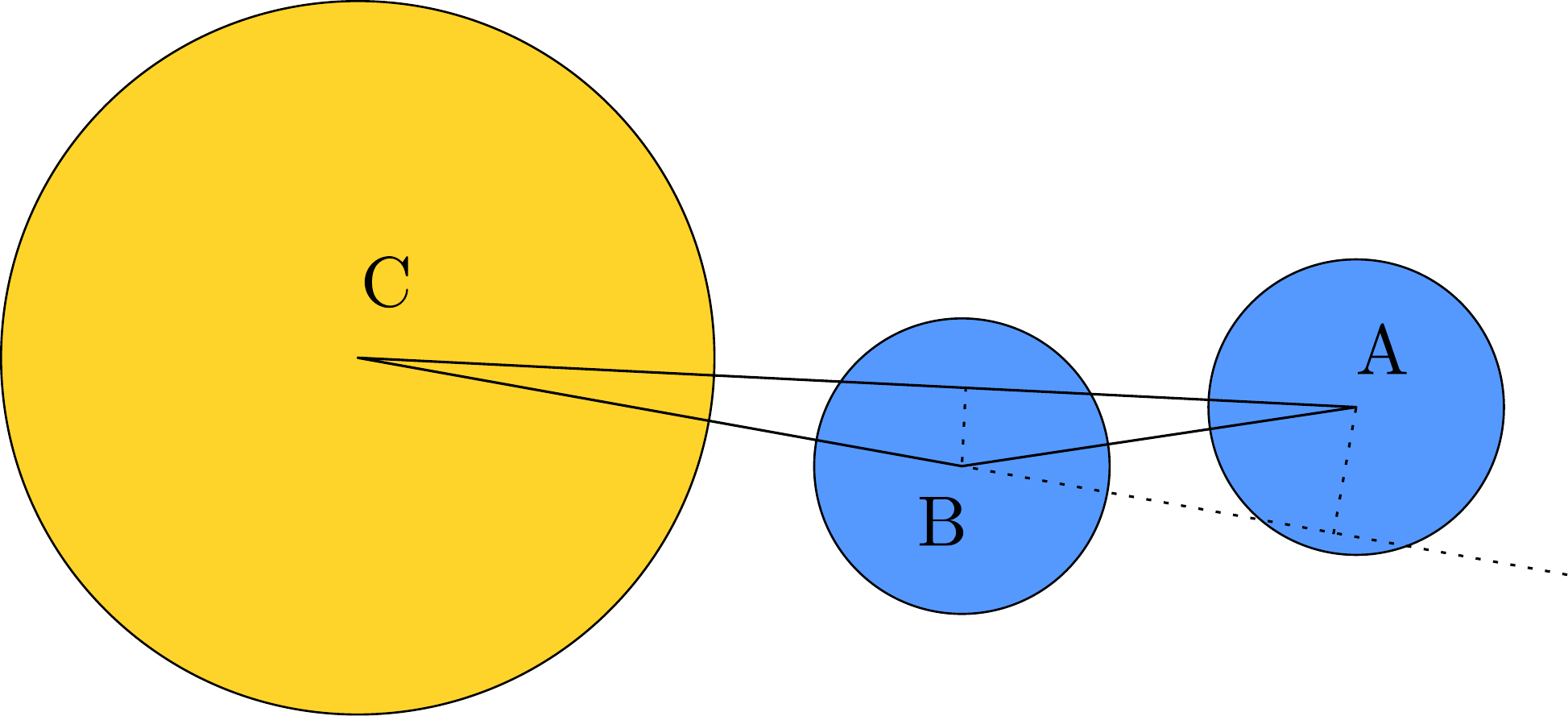}
\caption{A triangle with a vertex of altitude less than $r$ cannot be feasible.}
\label{fig:non_feasible}
\end{figure}

\begin{proposition}
\label{prop:local}
Let $i\in\{1,\ldots,9\}$.
Take for $m_1$ and $m_r$ the lower bound given in Tab.~\ref{tab:m} or \ref{tab:m_c5}.
Take for $Z_1$ and $Z_r$ the values given in Tab.~\ref{tab:Z}.
Take for $\varepsilon$ the value given in Tab.~\ref{tab:eps}.
Take for $l_e$ and $q_e$ the values given in Tab.~\ref{tab:LQ}.
Then, the local inequality $E(T)\geq U(T)$ holds for any triangle of an FM-triangulation of a saturated packing by discs of radius $1$ and $r_i$.
\end{proposition}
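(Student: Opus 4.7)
The plan is to derive Proposition~\ref{prop:local} from an exhaustive dichotomic search over the space of candidate triangles, certified by the interval-arithmetic tools described in Section~\ref{sec:computer}. Fix a pattern of vertex radii $(a,b,c)\in\{1,r_i\}^3$: up to relabeling there are only four such patterns. A triangle with this pattern is fully described by its three edge lengths $(x_1,x_2,x_3)$, each of which lies in a bounded interval: from below $x_k\geq r_a+r_b$ because discs are interior-disjoint, and from above $x_k\leq r_a+r_b+2r$ by the argument of Lemma~\ref{lem:min_angles} (connect both endpoints to the center of the support disc, whose radius is at most $r$ by saturation and Lemma~\ref{lem:non_feasible}). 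The search space is thus a finite union of compact boxes in $\mathbb{R}^3$.

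On each such box $B$, I would run the following recursive routine. Compute interval enclosures of $E(T)$ and of the capped $U(T)$: if the left endpoint of the enclosure of $E(T)-U(T)$ is nonnegative, the inequality is certified on $B$ and we stop. Otherwise, try to discard $B$ via three shortcuts, all evaluated in interval arithmetic: (i) every $T$ with edges in $B$ is $\varepsilon$-tight, so Proposition~\ref{prop:local_tight} applies; (ii) every such $T$ violates one of the feasibility conditions of Lemma~\ref{lem:non_feasible} (support disc radius above $r$, area below $\tfrac12\pi r^2$, or some altitude below $r$), so $B$ contains no FM-triangle; (iii) every vertex potential on $B$ is in the capped regime, so $U(T)\leq 3\max(Z_1,Z_r)$ and can be compared directly to the enclosure of $E(T)$. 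If none of these succeed, bisect $B$ along its longest coordinate and recurse on the two halves.

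The constants $l_{xy}$ and $q_{xy}$ of the edge potential must be tuned to make this recursion actually close. As Fig.~\ref{fig:plot} illustrates, the naked inequality $E(T)\geq\sum_v U_v(T)$ fails near stretched triangles because $E(T)$ plunges while the vertex potential does not; the term $q_{xy}d_e(T)$, which turns negative exactly on the stretched side of a long edge $e$, is designed to absorb that drop, Proposition~\ref{prop:edge_pot} ensuring that the global inequality~\eqref{eq:global} survives. One picks $l_{xy}$ larger than the longest tight edge carrying discs of radii $x,y$ so that the edge potential vanishes on $\varepsilon$-tight triangles (keeping Proposition~\ref{prop:local_tight} valid), and $q_{xy}$ just large enough to restore the inequality on the stretched regime without breaking it elsewhere. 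The main obstacle is exactly this calibration together with termination of the dichotomy: because of the dependency problem in interval arithmetic, the enclosure of $E(T)-U(T)$ is systematically pessimistic, and the recursion halts in reasonable time only if $(l_{xy},q_{xy})$ yields a strict positive margin uniformly outside the $\varepsilon$-tight and non-feasible regions. Finding values that simultaneously work for all nine radii $r_i$—the content of Table~\ref{tab:LQ}—and verifying termination by running the routine (function \verb+local_inequality+ in \verb+binary.sage+) are the genuinely computer-assisted components of the proof.
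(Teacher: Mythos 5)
Your proposal is correct and follows essentially the same route as the paper: bound the edge lengths of each radius pattern in a compact box $[x+y,x+y+2r]\times[x+z,x+z+2r]\times[y+z,y+z+2r]$ using saturation, then run an interval-arithmetic dichotomy that stops when the inequality is certified, when the box is shown non-feasible via Lemma~\ref{lem:non_feasible}, or when it falls inside the $\varepsilon$-tight region covered by Proposition~\ref{prop:local_tight} (the escape that you rightly identify as essential for termination, since $E=U$ at the tight corner). The only differences are cosmetic implementation choices (bisecting the longest coordinate versus splitting into $2^3$ sub-boxes, and your extra capped-regime shortcut).
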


\begin{table}[hbt]
\centering
\begin{tabular}{c|cccccc}
$i$ & $l_{11}$ & $q_{11}$ & $l_{1r}$ & $q_{1r}$ & $l_{rr}$ & $q_{rr}$\\
\hline
$1$ & $2.70$ & $0.10$ & $2.30$ & $0.10$ & $1.90$ & $0.10$\\
$2$ & $2.60$ & $0.20$ & $2.10$ & $0.20$ & $1.60$ & $0.20$\\
$3$ & $2.60$ & $0.10$ & $2.20$ & $0.20$ & $1.65$ & $0.10$\\
$4$ & $2.50$ & $0.15$ & $1.80$ & $0.20$ & $1.20$ & $0.20$\\
$5$ & $2.40$ & $0.05$ & $1.80$ & $0.05$ & $1.10$ & $0.07$\\
$6$ & $2.50$ & $0.20$ & $1.75$ & $0.20$ & $1.00$ & $0.20$\\
$7$ & $2.40$ & $0.00$ & $1.60$ & $0.05$ & $0.80$ & $0.10$\\
$8$ & $2.24$ & $0.020$ & $1.33$ & $0.015$ & $0.44$ & $0.020$\\
$9$ & $2.17$ & $0.020$ & $1.217$ & $0.015$ & $0.2857$ & $0.020$
\end{tabular}
\caption{Values $l_e$ and $q_e$ which define the edge potentials, where $l_{xy}$ and $q_{xy}$ stands for an edge $e$ whose discs have radii $x$ and $y$.
More details on how these values have been chosen are given in Appendix~\ref{sec:LQ}.
}
\label{tab:LQ}
\end{table}

\begin{proof}
We shall check the inequality over all the possible triangles with the computer.
For $x\leq y\leq z$ in $\{1,r\}$, any triangle with discs of radius $x$, $y$ and $z$ which appear in an FM-triangulation of a saturated packing has edge length in the compact set
\[
[x+y,x+y+2r]\times[x+z,x+z+2r]\times [y+z,y+z+2r].
\]
Indeed, its support disc has radius at most $r$ (saturation hypothesis) so that the center of a disc of radius $q$ is at distance at most $q+r$ from the center of the support disc.
We can thus compute $E(T)$ and $U(T)$ using these intervals for the edge lengths of $T$.

Of course, since these intervals are quite large, we get for $E(T)$ and $U(T)$ large overlapping intervals which do not allow to conclude whether $E(T)\geq U(T)$ or not.
We use dichotomy: while the intervals are too large to conclude, we halve them and check recursively on each of the $2^3$ resulting compacts whether $E(T)\geq U(T)$ or not.
If we get $E(T)\geq U(T)$ at some step, we stop the recursion.
If we get $E(T)<U(T)$ at some step, we throw an error: the local inequality is not satisfied!

At each step, we also check whether Lemma~\ref{lem:non_feasible} ensures that the triangle is not feasible, in which case we eliminate it and stop the recursion (the way we compute the radius of the support disc is detailed in Appendix~\ref{sec:support_disc}).
Last, we also stop the recursion if we get an $\varepsilon$-tight triangle at some step, that is, if we get a subset of the compact
\[
[x+y,x+y+\varepsilon]\times[x+z,x+z+\varepsilon]\times [y+z,y+z+\varepsilon].
\]
Indeed, the local inequality is then already ensured by Prop.~\ref{prop:local_tight}.
This point is crucial and explains why we focused on $\varepsilon$-tight triangles in Section~\ref{sec:local_tight}.
Since $E(T)=U(T)$ for any tight triangle $T$, any non-empty interior compact which contains the point $(x+y,x+z,y+z)$
, no matter how small it is, yields for $E(T)$ and $U(T)$ overlapping intervals which do not allow to decide whether $E(T)>U(T)$ or not: the recursion would last forever!

For each $i$, the whole process terminates without throwing any error.
On our computer with our (noncompilated) python implementation, cases $1$--$7$ are checked in a few dozen seconds each, case $8$ in around 25 min. and case $9$ in 3 h. 30 min.
Tab.~\ref{tab:stats} gives some statistics on the number of checked triangles.
This proves the proposition.
\end{proof}

\begin{table}[hbt]
\centering
\begin{tabular}{c|cccc}
$i$ & 111 & 11r & 1rr & rrr \\
\hline
$1$ & $1940$ & $5559$ & $7547$ & $6000$\\
$2$ & $2633$ & $15961$ & $25201$ & $21211$\\
$3$ & $1842$ & $8443$ & $8261$ & $5405$\\
$4$ & $1415$ & $13406$ & $20357$ & $5755$\\
$5$ & $1128$ & $25691$ & $64534$ & $36786$\\
$6$ & $1275$ & $12818$ & $25943$ & $7393$\\
$7$ & $778$ & $22093$ & $62805$ & $4859$\\
$8$ & $232$ & $180391$ & $2316371$ & $17305$\\
$9$ & $92$ & $535858$ & $19069436$ & $19622$\\
\end{tabular}
\caption{
Number of triangles of each type on which the local inequality~\eqref{eq:local} had to be checked.
The 1rr-triangles for $i=9$ are by far the hardest case.
}
\label{tab:stats}
\end{table}

\section{Proof of Theorem~\ref{th:main}}

To conclude, let us shortly recall from Section~\ref{sec:localization} how the proven results fit together to obtain Theorem~\ref{th:main}.

Proposition~\ref{prop:vertex_pot} (see also Subsection~\ref{sec:c5} for the case $5$) proves the vertex inequality~\eqref{eq:vertex}.
Proposition~\ref{prop:edge_pot} proves the edge inequality~\eqref{eq:edge}.
Together, these inequalities yield the global inequality~\eqref{eq:global}.

Proposition~\ref{prop:local_tight} proves the local inequality~\eqref{eq:local} for $\varepsilon$-tight triangles, that is, almost tight triangles.
Note that, for tight triangles, the inequality~\eqref{eq:local} is an equality, which is ensured by the defintion of vertex potentials in tight triangles (Subsection~\ref{sec:vertex_pot}).
Proposition~\ref{prop:local} proves the local inequality~\eqref{eq:local} for the remaining triangles.
It relies on a dichotomy approach which eventually terminates because $E-U$ is uniformly bounded away from zero over these triangles (this is why we considered separately $\varepsilon$-tight triangles).
Note that this is also here that the capping described in Section~\ref{sec:capping} plays a role.

The global inequality~\eqref{eq:global} and the local inequality~\eqref{eq:local} ensure that the density $\delta$ of any packing is less or equal than the density $\delta_i$ of the target packing, as explained in the beginning of Section~\ref{sec:localization}.
This proves Theorem~\ref{th:main}.

\appendix
\section{Radii and densities}
\label{sec:exact}

Tab.~\ref{tab:radii} and \ref{tab:densities} give minimal polynomials of the radii $r_i$'s and the reduced densities $\delta_i/\pi$.
The polynomials for the radii come from \cite{Ken06}.
To compute the reduced density, consider a fundamental domain depicted in Fig.~\ref{fig:targets}.
The total area covered by discs divided by $\pi$ is an algebraic number since the radii are algebraic.
The area of any tight triangle is algebraic since so are the radii -- hence the edge length.
The quotien is thus an algebraic number whose minimal polynomial is the one given in Tab.~\ref{tab:densities}.

\begin{table}[hbt]
{\small 
\begin{enumerate}
\item $x^4 - 10x^2 - 8x + 9$,
\item $x^8 - 8x^7 - 44x^6 - 232x^5 - 482x^4 - 24x^3 + 388x^2 - 120x + 9$,
\item $8x^3 + 3x^2 - 2x - 1$,
\item $x^2 + 2x - 1$,
\item $9x^4 - 12x^3 - 26x^2 - 12x + 9$,
\item $x^4 - 28x^3 - 10x^2 + 4x + 1$,
\item $2x^2 + 3x - 1$,
\item $3x^2 + 6x - 1$,
\item $x^2 - 10x + 1$.
\end{enumerate}
}
\caption{Minimal polynomial of the radius $r_i$.}
\label{tab:radii}
\end{table}

\begin{table}[hbt]
{\small 
\begin{enumerate}
\item $27x^4 + 112x^3 + 62x^2 + 72x - 29$,
\item $x^8 - 4590x^6 - 82440x^5 + 486999x^4 - 1938708x^3 + 2158839x^2 - 1312200x + 243081$,
\item $1024x^3 - 692x^2 + 448x - 97$,
\item $2x^2 - 4x + 1$
\item $944784x^4 - 3919104x^3 - 2191320x^2 - 1632960x + 757681$,
\item $144x^4 + 9216x^3 + 133224x^2 - 127104x + 25633$,
\item $4096x^4 + 2924x^2 - 289$,
\item $108x^2 + 288x - 97$,
\item $144x^4 - 4162200x^2 + 390625$.
\end{enumerate}
}
\caption{Minimal polynomial of the reduced density $\delta_i/\pi$.}
\label{tab:densities}
\end{table}

\section{Support disc}
\label{sec:support_disc}

Consider an FM-triangle with sides of length $a$, $b$ and $c$.
Denote by $\alpha$ (resp. $\beta$ and $\gamma$) the vertex opposite to the edge of length $a$ (resp. $b$ and $c$).
Denote by $r_a$ (resp. $r_b$ and $r_c$) the radius of the disc of center $\alpha$ (resp. $\beta$ and $\gamma$).
We here explain how to get a formula that allows to compute the radius $R$ of the support disc (which exists and is unique for an FM-triangle), even when edge lengths or disc radii are intervals.

Fix a Cartesian coordinate system with $\alpha=(0,0)$ and $\beta=(c,0)$.
Denote by $(u,v)$ the coordinates of $\gamma$, with $v>0$.
One has:
$$
u=\frac{b^2+c^2-a^2}{2c}
\quad\textrm{and}\quad
v=\sqrt{b^2-u^2}.
$$
Denote by $(x,y)$ the coordinates of the center of the support disc and by $R$ its radius.
The definition of the support disc yields three equations:
\begin{eqnarray*}
(R+r_a)^2&=&x^2+y^2,\\
(R+r_b)^2&=&(b-x)^2+y^2,\\
(R+r_c)^2&=&(u-x)^2+(v-y)^2.
\end{eqnarray*}
Subtracting the second equation from the first yields an expression in $R$ for $x$.
Then, subtracting the third equation from the first and replacing $x$ by its expression yields an expression in $R$ for $y$.
Last, replacing $x$ and $y$ by their expressions in the third equations yields a quadratic equation $AR^2+BR+C=0$, where $A$, $B$ and $C$ are complicated but explicit polynomials in $a$, $b$, $c$, $r_a$, $r_b$ and $r_c$ (they appear in the code of the function \verb+radius+ in \verb+binary.sage+).
The root we are interested in is the smallest positive one because we want the support disc to be interior-disjoint from the three discs\footnote{One can check that the discriminant of this quadratic polynomial is the product of the square of the area of the triangle and the terms $2(x-r_y-r_z)$ for each permutation $(x,y,z)$ of $(a,b,c)$, hence always non-negative.}.
Given exact values $a$, $b$, $c$, $r_a$, $r_b$ and $r_c$, hence exact values of $A$, $B$ and $C$, one can easily compute $R$:
$$
R=\left\{\begin{array}{ll}
\frac{-B\pm\sqrt{B^2-4AC}}{2A} & \textrm{if $A\neq 0$},\\
&\\
-\frac{C}{B} & \textrm{if $A=0$}.
\end{array}\right.
$$
But what if $A$ is an interval which contains zero (and not reduced to $\{0\}$)?
The second above formula cannot be used, while the division by $A$ in the first one yields $R=(-\infty,\infty)$.
To get around that, we make two cases, depending whether the interval $AC$ contains zero or not.

If $AC$ does not contain zero, we simply use the first formula.
More precisely, a short case study shows that the smallest positive root is
$$
R=\frac{-B-\textrm{sign}(C)\sqrt{B^2-4AC}}{2A},
$$
where $\textrm{sign}(C)$ denotes the sign of $C$, which is well defined since the interval $C$ does not contain zero.

If $AC$ contains zero, we shall use the fact that the roots of a polynomial are continuous in its coefficients.
Namely, when $A$ tends towards $0$, one of the roots goes to infinity while the interesting one goes towards $-\tfrac{C}{B}$.
This latter root is
$$
R=\frac{-B+B\sqrt{1-\tfrac{4AC}{B^2}}}{2A}=-\frac{C}{B}\frac{2B^2}{4AC}\left(1-\sqrt{1-\frac{4AC}{B^2}}\right).
$$
With $x=\tfrac{4AC}{B^2}$ and $f(x)=\tfrac{2}{x}(1-\sqrt{1-x})$, this can be written
$$
R=-\tfrac{C}{B}f(x).
$$
If we set $f(0):=1$, then $f$ becomes continuously derivable over $(-\infty,1)$.
The Taylor's theorem then ensures that for any real number $x<1$, there exists a real number $\xi$ between $0$ and $x$ such that
$$
f(x)=1+xf'(\xi).
$$
One checks that $f'$ is positive and increasing over $(-\infty,1)$.
One computes
$$
f'(0.78)\approx 0.9879 <1.
$$
If $x$ is an interval which contains $0$ and whose upper bound is at most $0.78$, then
$$
f(x) \subset 1+x\times f'\left((-\infty,0.78]\right) \subset 1+x\times [0,1]= 1+x.
$$
This yields the wanted interval around $-\tfrac{C}{B}$:
$$
R\subset -\frac{C}{B}\left(1+\frac{4AC}{B^2}\right).
$$
The above formula still yields $R=(-\infty,\infty)$ if $B$ contains $0$ as well as $A$.
Moreover, we assumed $x<0.78$, which can be false if $A$, $B$ and $C$ are large interval: in such a case we have to use the first formula which also yields $R=(-\infty,\infty)$.
Both cases however happens only when the intervals $A$, $B$ and $C$ have a quite large diameter, that is, in the very few first steps of the recursive local inequality checking.

\section{Parameters of the edge potential}
\label{sec:LQ}

The rule of thumb (which could perhaps be made rigorous) used to choose the constants $l_e$ and $q_e$ in Prop.~\ref{prop:local} (Tab.~\ref{tab:LQ}) is that if the local inequality works for the triangles with only one pair of discs which are not tangent, then it seems to work for any triangle.
We thus consider triangles $T$ with circles of size $x$ and $y$ centered on the endpoints of an edge $e$ and vary the length of $e$,  as in Fig.~\ref{fig:plot}.
We first fixed $l_e$ close to the length for which $d_e(T)$ changes its sign, then fixed $q_e$ so that $U(T)$ is slightly less than $E(T)$ when $T$ is stretched.

\paragraph{Acknowledgments.}
We thank Stef Graillat for useful discussions about computing the support disc radius with interval arithmetic when $A$ contains $0$ (Appendix~\ref{sec:support_disc}).
We thank Daria Pchelina for pointing us the necessity of the condition \eqref{eq:mq_for_capping} page~\pageref{eq:mq_for_capping}.
We thank the referee of the first version of this paper.

\bibliographystyle{alpha}
\bibliography{binary}
\end{document}